\titleformat{\section}{\large\bfseries}{\thesection}{1em}{}[\vspace*{8pt}\hrule]
\titleformat{\subsection}{\bfseries}{\thesubsection}{1em}{}
\renewcommand\appendix{\setcounter{secnumdepth}{-2}}
\newcommand{\watermark}[3]{\AddToShipoutPictureBG{
\parbox[b][\paperheight]{\paperwidth}{
\vfill
\centering
\tikz[remember picture, overlay]
  \node [rotate = #1, scale = #2] at (current page.center)
    {\textcolor{tsinghua!20}{#3}};
\vfill}}}
\newcommand{\f}[2]{\frac{#1}{#2}}
\newcommand{\Rmnum}[1]{\mathrm{\expandafter\@slowromancap\romannumeral #1@}}
\newcommand{\df}{\mathsf{d}}
\newcommand{\id}{\mathsf{id}}
\newcommand{\ha}{\alpha}
\newcommand{\hb}{\beta}
\newcommand{\p}{\partial}
\newcommand{\pd}[2]{\frac{\partial#1}{\partial#2}}
\renewcommand{\hom}{\mathrm{Hom}}
\newcommand{\wt}{\mathsf{wt}}
\newcommand{\ainfty}{$\mathsf{A_\infty}$-}
\newcommand{\linfty}{$\mathsf{L_\infty}$-}
\newcommand{\e}{\mathsf{e}}
\newcommand{\ho}{\mathsf{H}}
\newcommand{\lra}[1]{\langle #1 \rangle}
\newcommand{\lr}[1]{\left(#1\right)}
\providecommand{\abs}[1]{\left\lvert#1\right\rvert}
\theoremstyle{plain}
\newtheorem*{mthm}{Theorem}
\newtheorem{thm}{Theorem}[section]
\newtheorem{lem}[thm]{Lemma}
\newtheorem{coro}[thm]{Corollary}
\theoremstyle{definition}
\theoremstyle{remark}
\newtheorem*{rem}{Remark}
\newcommand{\hh}{\mathrm{HH}}
\newcommand{\hp}{\mathrm{HP}}
\newcommand{\fix}{\mathrm{Fix}}
\newcommand{\jac}{\mathrm{Jac}}
\newcommand{\ks}{\mathsf{KS}}
\begin{document}
\numberwithin{equation}{section}
\tikzset{->-/.style={decoration={
  markings,
  mark=at position #1 with {\arrow{>}}},postaction={decorate}}}
\tikzset{->>-/.style={decoration={
  markings,
  mark=at position #1 with {\arrow{>>}}},postaction={decorate}}}
\watermark{60}{10}{}
\title{\bfseries{Unfolding of Orbifold LG B-models: a case study}}
\author{Weiqiang He, Si Li, Yifan Li}
\date{}
\newcommand{\Addresses}{{
  \bigskip
  \footnotesize

  Weiqiang He, \textsc{Department of Mathematics, Sun Yat-sen University
    Guangzhou, 510275, China}\par\nopagebreak
  \textit{E-mail address}: \texttt{hewq@mail2.sysu.edu.cn}

  \medskip

  Si Li, \textsc{Department of Mathematical Sciences and Yau Mathematical Sciences Center, Tsinghua University,
    Beijing 100084, China}\par\nopagebreak
  \textit{E-mail address}: \texttt{sili@mail.tsinghua.edu.cn}

  \medskip

  Yifan Li, \textsc{Yau Mathematical Sciences Center, Tsinghua University,
    Beijing 100084, China}\par\nopagebreak
  \textit{E-mail address}: \texttt{yf-li14@mails.tsinghua.edu.cn}

}}
\maketitle
\begin{abstract} In this note we explore the variation of Hodge structures associated to the orbifold Landau-Ginzburg B-model whose superpotential has two variables. We extend the  Getzler-Gauss-Manin connection to Hochschild chains  twisted by group action. As an application, we provide explicit computations for the Getzler-Gauss-Manin connection on the universal (noncommutative) unfolding of $\mathds{Z}_2$-orbifolding of A-type singularities. The result verifies an example of deformed version of Mckay correspondence.

\end{abstract}

\thispagestyle{plain}
\pagestyle{fancy}

\section{Introduction}
Associated to a triple $(A,W,G)$, where $A$ is an associative algebra over $\mathds{C}$ with a compatible $G$-action and $W$ is a $G$-invariant central element of $A$, we consider a curved algebra $A_W[G] \coloneqq A \rtimes \mathds{C}[G]$ with $W$ as a curvature. In this note, we investigate the deformation theory and Hodge structures for a certain type of such curved algebras.

In \cite{HLL18}, we have shown that the compact type Hochschild cohomology $\hh^\bullet_c(A_W,A_W[G])^G$ is isomorphic to $\hh^\bullet_c(A_W[G],A_W[G])$ as Gerstenhaber algebras. As a consequence, the deformation of $A_W[G]$ is controlled by the differential graded Lie algebra (\textsf{dgLa}) of the Hochschild cochains $(C^\bullet_c(A_W,A_W[G])^G,\delta_b,\{\,,\,\}$. In this paper, we study  polynomial algebras in two variables $A = \mathds{C}[x,y]$. This includes orbifold ADE singularities as our main interest in this paper. Obstruction theoretical computation shows that the relevant \textsf{dgLa} is un-obstructed, leading to a smooth formal moduli space $\mathcal M$ which is locally parameterized by the Hochschild cohomology $HH^\bullet(A_W[G],A_W[G])$.

Our study of this moduli space $\mathcal M$ is motivated by Saito's work \cite{S83} on isolated singularities, which is related to so-called Landau-Ginzburg (LG) B-models in modern terminology. In \cite{S83}, it was shown that the deformation space of an isolated singularity carries a version of variation of polarized Hodge structures with semi-infinite filtrations. It leads to an integrable structure on the tangent bundle of the moduli, which is nowadays called Frobenius manifold. This Frobenius manifold structure plays a central role in topological field theories, especially in Gromov-Witten type theories. For example, the data of Frobenius manifold on the deformation of isolated polynomial singularities is mirror to the data of counting solutions of Witten's equation on Riemann surfaces, a theory known as Fan-Jarvis-Ruan-Witten (FJRW) theory  \cite{FJR13} for Landau-Ginzburg A-models.  However, Saito's construction only involves `un-orbifold' cases $(A,W,G= \lra{1})$, while the full mirror symmetry between Landau-Ginzburg models asks for all orbifold groups. This requires the construction and computation of Frobenius manifold structure on the aforementioned moduli space $\mathcal M$.

 Barannikov \cite{B01,B02} and Barannikov-Kontsevich \cite{BK} introduced the important
notion of (polarized) variation of semi-infinite Hodge structures (VSHS), generalizing Saito's framework to many other geometric contexts and non-commutative world \cite{B01,KKP08}. Following this route, we shall consider the period cyclic homology of a deformed algebra of $A_W[G]$, with a Hodge filtration induced by the cyclic parameter $u$ and the flat Gauss-Manin connection constructed by Getzler \cite{G93}. They give rise to a flat bundle over the moduli space $\mathcal M$, carrying important data of Hodge filtration. In this note, we establish a version of the Getzler-Gauss-Manin connection via operations of $G$-twisted cochains $C^\bullet_c(A,A[G])^G$ acting on the $G$-twisted chains $C^c_\bullet(A,A[G])_G$.  This encodes the same information as the Getzler-Gauss-Manin connection on the deformation space of the algebra $A[G]$, but is easier to compute in practice. As an application, we perform a case study for orbifold A-type singularity $(A_{2n-1},\mathds{Z}_2)$. We find (see Theorem \ref{main-thm}),

\begin{mthm}
Consider an orbifold LG B-model $(A,W,G)$ with $A=\mathds{C}[x,y]$, $W$ invertible and finite $G\subset \mathrm{SL}(2,\mathds{C})$ acting diagonally on $\mathds{C}^2$. The moduli space $\mathcal{M}$ of miniversal deformations of $A_W[G]$ is smooth, equipped with a variation of semi-infinite Hodge structures (VSHS) given by a flat vector bundle of period cyclic homologies. In this fashion, there is an isomorphism between the moduli spaces associated to $(\mathds{C}[x,y],x^{2n}+y^2,\mathds{Z}_2)$ and $(\mathds{C}[z,w],{z^n+zw^2},\lra{1})$, which is compatible with the VSHS's on them.
\end{mthm}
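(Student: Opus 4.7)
The plan is to treat smoothness, miniversality, and the comparison of VSHS's in sequence, handling the two sides of the claimed isomorphism first separately and then identifying them.

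\emph{Step 1 (Hochschild cohomology and smoothness).} First I would compute $\hh^\bullet(A_W[G],A_W[G])$ through the $G$-decomposition $\hh^\bullet_c(A_W,A_W[G])^G = \bigoplus_{g\in G} \hh^\bullet_c(A_W,A_W\cdot g)^G$ established in \cite{HLL18}. For $A=\mathds{C}[x,y]$ and $G\subset\mathrm{SL}(2,\mathds{C})$ diagonal, each summand reduces to the cohomology of a Koszul-type complex on the fixed locus $\mathrm{Fix}(g)$, twisted by the restricted potential $W|_{\mathrm{Fix}(g)}$. Since $W$ is invertible, each fixed-locus potential is again an isolated singularity (or a smooth point contributing $\mathds{C}$), so each summand is concentrated in a single degree and is a free Jacobian-type module. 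To prove smoothness of the associated \textsf{dgLa} $C^\bullet_c(A_W,A_W[G])^G$, I would verify the Maurer-Cartan obstruction vanishes by showing degree reasons force all brackets of classes in $\hh^2$ to be exact — this is the standard two-variable trick (polynomial algebras in two variables have homological dimension $2$, so Gerstenhaber brackets of degree-$2$ classes are coboundaries).

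\emph{Step 2 (miniversal families).} Using a basis of $\hh^\bullet(A_W[G],A_W[G])$, I would write down an explicit miniversal deformation of the curved algebra $A_W[G]$: the untwisted sector deforms $W$ by the $G$-invariant part of $\jac(W)$, while each twisted sector $g\neq 1$ deforms $W$ on $\mathrm{Fix}(g)$ by a basis of $\jac(W|_{\mathrm{Fix}(g)})$ (promoted to an element with formal noncommutative parameter attached to $g$). For the case $(W=x^{2n}+y^2,\mathds{Z}_2)$ the untwisted sector gives $n$ parameters (the even-degree monomials $1,x^2,\dots,x^{2n-2}$) and the $-1$-twisted sector contributes $1$ further parameter, matching the Milnor number $n+1$ of $z^n+zw^2$.

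\emph{Step 3 (Getzler-Gauss-Manin connection).} Using the twisted Hochschild chain model $C^c_\bullet(A,A[G])_G$ and the action of $G$-twisted cochains developed earlier in the paper, I would compute the Hodge filtration and GGM connection on the period cyclic homology $\hp_\bullet$ of the miniversal family. The key simplification in two variables is that Hochschild-to-cyclic degenerates, so $\hp_\bullet$ is a free module over $\mathds{C}(\!(u)\!)[\![\mathcal M]\!]$ of rank equal to the Milnor number, and the connection $\nabla^{GGM}_{\p/\p t_i}$ is computed by contraction with the Kodaira-Spencer class of the deformation $\p_{t_i}$.

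\emph{Step 4 (identification with the Saito/Brieskorn side).} For $(\mathds{C}[z,w],z^n+zw^2,\lra 1)$, Saito's classical theory identifies $\hp_\bullet$ with the Brieskorn lattice, with GGM connection equal to the Gauss-Manin connection on the versal unfolding. To match it with the orbifold side, I would construct an explicit map of miniversal parameters sending the $2n-1$ twisted and untwisted generators above to a monomial basis of $\jac(z^n+zw^2)=\mathds{C}\lra{1,z,\dots,z^{n-1},w}$ — the twisted $\mathds{Z}_2$ parameter corresponds naturally to the $w$-direction, reflecting the McKay-type exchange between age-shifted twisted sector and the extra variable $w$. Intertwining the connections then reduces to comparing residue pairings and Kodaira-Spencer contractions on both sides.

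\emph{Main obstacle.} The delicate step is Step 4, namely producing the explicit isomorphism of VSHS's: while the dimensions and flat structures are forced to be isomorphic by general smoothness and rank arguments, identifying the Hodge filtrations (i.e.\ the precise twisted-sector contribution to the Brieskorn lattice of $z^n+zw^2$) requires a careful normalization of the twisted cyclic chains and tracking of the $u$-adic filtration through the $G$-quotient. This is where most of the computation will concentrate, and where the chosen noncommutative framework of Section 2 pays off compared with a purely commutative quotient-stack approach.
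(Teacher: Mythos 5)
Your outline tracks the paper's strategy closely: smoothness via the vanishing of obstructions on the Jacobian-ring model of $\hh^\bullet_c(A_W,A_W[G])^G$ (the paper phrases this as homotopy transfer to $\jac(W,G)$, which is concentrated in even degree while all transferred operations $\ell_k$ are odd, hence the \textsf{dgLa} is homological abelian — a slightly stronger statement than your ``brackets of $\hh^2$-classes are exact'', and one that also controls all higher obstructions at once); parametrization by untwisted parameters $\bm{\tau}$ and a twisted parameter $s$ with the count $n+1$ matching the Milnor number of $D_{n+1}$; the Getzler--Gauss--Manin connection transported to the $G$-twisted chains $C^c_\bullet(\bar A,A[G])_G$; and the identification of the twisted direction with the $w$-direction, which in the paper is the deformation $\hat W(\bm{\tau},s)=z^n+\sum_j\tau_jz^j+zw^2-sw$.

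The genuine gap is in your Step 4. The assertion that ``the dimensions and flat structures are forced to be isomorphic by general smoothness and rank arguments'' does not prove the theorem: two flat bundles of equal rank over a formal disk are always abstractly isomorphic, but the statement requires that the \emph{specific} bundle map $\Lambda$, defined by sending the chain representatives $\ha_{2k}\mapsto\hat\ha_{2k}$ and $\hb\mapsto\hat\hb$ under a \emph{specific} identification of deformation parameters, intertwines the two connections and hence the Hodge filtrations. This is not a formal consequence of residue pairings or Kodaira--Spencer contractions; in the paper it is established by computing explicit chain-level representatives through tensor degree $4$ (using the perturbed special homotopy retraction of the appendix, built from $\ho_C$, $\Phi\ho_K\Upsilon$ and $\Phi\Theta\ho_\Omega\Pi\Upsilon$), evaluating $\nabla_{\p/\p s}$ and $\nabla_{\p/\p\tau_l}$ on $[\ha_{2k}]$ and $[\hb]$ on both sides, and closing Case 4 by an induction on $k+l$ that uses the relations $x^{2m}\equiv$ lower-order terms modulo the deformed Jacobian ideal. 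Without carrying out (or at least exhibiting the mechanism of) this computation — in particular the appearance of the $u$-linear terms $u\frac{2m-2n+1}{2n}$ and the twisted-sector term $\frac{s}{2n}x^{2m-2n}\sigma$ on both sides — the compatibility of the VSHS's remains unproved. Your proposal correctly locates the difficulty but does not resolve it.
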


It can be seen as an example of Mckay correspondence for LG models \cite{Q09}, but involves the deformation data. Here, $( \mathds{C}[x,y],{x^{2n}+y^2},\mathds{Z}_2)$ is associated to the $A_{2n-1}$-singularity $W = x^{2n}+y^2$ on an orbifold $X = \mathds{C}^2/\mathds{Z}_2$ and $(\mathds{C}[z,w],{z^n+zw^2},\lra{1})$ is associated to the $D_{n+1}$-singularity $\tilde{W} = z^n+zw^2$ on $\mathds{C}^2 \overset{i}{\hookrightarrow} Y$, where $\pi\colon Y \to X$ is the minimal resolution (so it is crepant) and $\tilde{W} = i^*\circ \pi^*(W)$.

There are three directions of generalizations of such a correspondence.  One is for more general triples $(A = \mathcal{O}(X),W,G)$ as long as the crepant resolution of $X/G$ exists and the lifting superpotential $W$ has good Hodge theoretical properties (see \cite{LW19} for a recent discussion on this model and references therein). The second is to establish the correspondence between \textsf{VSHS}'s via  crepant resolutions and related mirror symmetry. This involves a combination of LG/CY correspondence and mirror symmetry. Thirdly, there is a categorical approach to the orbifold LG models, which is called the equivariant matrix factorization (see, for example, \cite{D11,PV12,S17}). It would be very interesting to compare the categorical deformation theory with our calculations. We hope to investigate these problems in future works.

\noindent \textbf{Acknowledgements.}
The authors would like to thank beneficial discussions with Xiang Tang,  Junwu Tu, and Zhengfang Wang. The work of S. Li is partially supported by grant 11801300 of NSFC  and grant Z180003 of Beijing Natural Science Foundation. The work of W. He is partially supported by Tsinghua Postdoc Grant 100410019.

\section{Preliminary}
In this note, $\mathds{C}$ is taken as the base field for convenience. For a $\mathds{Z}$ or $\mathds{Z}_2$-graded vector space $A$, we denote by $sA$ its suspension, where $\lr{sA}_k \coloneqq \lr{A[-1]}_k = A_{k-1}$. We use the Koszul sign convention and regard $s$ as a degree $1$ element. Given two graded vector spaces $A$ and $M$, the spaces of Hochschild (co)chains and compact type Hochschild (co)chains are defined as

\begin{align*}
C^\bullet (A,M) \coloneqq \prod_{p\geqslant 0}\hom(\lr{sA}^{\otimes p},sM)[1],\quad& C^\bullet_c (A,M) \coloneqq \bigoplus_{p\geqslant 0}\hom(\lr{sA}^{\otimes p},sM)[1],\\
C_\bullet (A,M) \coloneqq \bigoplus_{p\geqslant 0}sM\otimes\lr{sA}^{\otimes p}[1],\quad\qquad& C_\bullet^c (A,M) \coloneqq \prod_{p\geqslant 0}sM\otimes\lr{sA}^{\otimes p}[1].
\end{align*}

We will write $[a_1| \cdots | a_p]$ for an element in $(sA)^{\otimes p}$, $m[a_1|\cdots |a_p]$ an element in $M\otimes (sA)^{\otimes p}$ and $\phi[a_1|\cdots |a_p]$ the value of $\phi\in C^p(A,M)$ acting on $[a_1| \cdots | a_p]$.

\begin{rem}For each $\phi\in C^p(A,M)$, we can associate $\phi\circ s^{\otimes p}\in \hom(A^{\otimes p},M)$ as
\[
\phi\circ s^{\otimes p}(a_1\otimes\cdots\otimes a_p)\coloneqq (-1)^{\sum\limits_{k=1}^{p-1}(p-k)|a_k|} \phi[a_1|\cdots|a_p].
\]
This fixes our sign conventions for Hochschild (co)chains.
\end{rem}

There are two different gradings for these (co)chains, the tensor grading and the internal grading, which are determined by the grading of $A$ and $M$. We denote by $\abs{\cdot}$ the internal grading. For example, for a homogeneous (with respect to both gradings) cochain $\phi\in C^p(A,M)$,

\begin{equation}
\abs{\phi} = \abs{\phi [a_1|\cdots| a_p]}  - \abs{a_1} - \cdots - \abs{a_p}-p.
\end{equation}

In \cite{VG95}, Gerstenhaber introduced the brace structure by higher operations, the braces on (compact type) Hochschild cochians. For homogeneous $\phi\in C^p(A,B)$ and $\phi_k\in C^{p_k}(B,C)$, we can define for $m = p+p_1+\cdots +p_n-n$,
\begin{align}
&\phi\{\phi_1,\cdots,\phi_n\}[a_1|\cdots |a_m]\nonumber\\ \coloneqq &(-1)^{\sum\limits_{k=1}^n\epsilon_{i_k}(\abs{\phi_k}+1)}\phi[a_1|\cdots |a_{i_k}|\phi_k[a_{i_{k}+1}|\cdots|a_{i_k+p_k}]|\cdots|a_m],
\end{align}
where
\[
\epsilon_i = \sum_{j=1}^i |a_j|-i.
\]
Notice that there is a one-shifted Lie algebraic structure on (compact type) Hochschild cochains $C^\bullet(A,A)$ (or $C^\bullet_c(A,A)$). It is defined \cite{G63} as the commutator of Gerstenhaber product (the brace operation with only one input),

\begin{equation}
\{\phi_1,\phi_2\} \coloneqq \phi_1\{\phi_2\}-(-1)^{(\abs{\phi_1}+1)(\abs{\phi_2}+1)}\phi_2\{\phi_1\}.
\end{equation}

In this note, we will work with $2$-dimensional orbifold Landau-Ginzburg models $(A_W,G)$. Here $A_W$ is denoted for a curved algebra $(A,W)$, where $A = \mathds{C}[x,y]$ and $W$ is an invertible polynomial. $G$ with the identity $\e$ is a finite group acting diagonally on $\mathds{C}^2$, which can be extended to an equivariant action on $A$. $W$ is asked to be $G$-invariant. (See \cite{K09,FJR13,HLL18} for details.) We will regard $A_W$ as a $\mathds{Z}/2\mathds{Z}$-graded \ainfty algebra concentrated in degree zero with $b_0 = -W$, $b_2[a_1|a_2] = (-1)^{|a_1|}a_1a_2=a_1a_2$ and $b_i = 0, \forall i\neq 0,2$. 
Similarly, the $G$-twisted curved algebra $A_W[G]$ is also regarded as a curved algebra on $A\otimes_{\mathds{C}} \mathds{C}[G]$ with $b_0 = -W\e$ and $b_2[a_1g_1|a_2g_2] = a_1{}^{g_1}a_2g_1g_2$ (Thus, $A_W[G] = A[G]_{W\e}$).

For an \ainfty algebra $A$, we can define boundary operators on the
(compact type) Hochschild (co)chains as follows. For $a_0[a_1|\cdots|a_p]\in C_p(A,A)$ (or $C_p^c(A,A)$),
\begin{align}
\p_{b}(a_0[a_1|\cdots |a_p]) \coloneqq &\sum_{l =1}^{p+1}\sum_{k=p+1-l}^p (-1)^{\varepsilon_k(\varepsilon_p-\varepsilon_k)} b_l[a_{k+1}|\cdots |a_0|\cdots][a_{k+l-p}|\cdots |a_k]\nonumber\\
&+ \sum_{l=0}^{p}\sum_{k=0}^{p-l}(-1)^{\varepsilon_k}a_0[\cdots|b_l[a_{k+1}|\cdots |a_{k+l}]| \cdots|a_p],
\end{align}
where
\begin{equation}
\varepsilon_k \coloneqq \abs{a_0}+\cdots + \abs{a_k} +k+1,
\end{equation}
and for $\phi \in C^p(A,A)$ (or $C^p_c(A,A)$),
\begin{equation}
\delta_b(\phi) \coloneqq \{b,\phi\}.
\end{equation}
In our cases,
\begin{equation}
\p_{b} = \p_{b_0}+\p_{b_2},
\end{equation}
with
\begin{equation}
\begin{dcases}
\p_{b_2}(a_0[a_1|\cdots |a_p]) \coloneqq &a_{0}a_1[a_2|\cdots |a_p]+(-1)^{p} a_pa_0[a_1|\cdots |a_{p-1}]\nonumber\\
&+ \sum_{k=0}^{p-2}(-1)^{k+1}a_0[a_1|\cdots|a_{k+1}a_{k+2}|\cdots|a_p],\\
\p_{b_0}(a_0[a_1|\cdots |a_p]) \coloneqq &\sum_{k=0}^{p-1}(-1)^{k}a_0[a_1|\cdots|a_k|W| a_{k+1}|\cdots|a_p].
\end{dcases}
\end{equation}
and
\begin{equation}
\delta_b\phi = \delta_{b_0}\phi+\delta_{b_2}\phi \coloneqq (-1)^{p-1}\phi\{W\} + \{b_2,\phi\}.
\end{equation}
The (compact type) Hochschild homology and cohomology are defined as the homology and cohomology of the (compact type) chains and cochains with differentials $\delta_b$ and $\p_b$ respectively.

While $A$ is augmented, we may consider the reduced Hochschild (co)chains defined on $\bar{A} = A/\mathds{C}$ (see \cite{L13} for details).

\section{Deformation Theory}

As we have shown in \cite{HLL18}, we can define higher operations on the $G$-twisted version of (compact type) Hochschild cochains. Thus, there is a Gerstenhaber algebra structure on
$\hh^\bullet_c(A_W,A_W[G])^G$, which is isomorphic to $\hh^\bullet_c(A_W[G],A_W[G])$.

\begin{thm}\label{thm: non-obs}
Consider orbifold LG B-models $(A,W,G)$ with $A=\mathds{C}[x,y]$, $W$ invertible and finite $G\subset \mathrm{SL}(2,\mathds{C})$ (the Calabi-Yau condition) acting diagonally on $\mathds{C}^2$. Then the shifted dgLa (differential graded Lie algebra)
\[(C^\bullet_c(A,A[G])^G,\delta_b, \{\,,\,\})\]
is homological abelian.
\end{thm}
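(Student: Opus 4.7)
The plan is to exhibit a small Koszul-type model for $C^\bullet_c(A,A[G])^G$ and then show that the induced Gerstenhaber bracket on cohomology vanishes purely for degree/parity reasons. Since $G\subset\mathrm{SL}(2,\mathds{C})$ acts diagonally on $\mathds{C}^2$ it is cyclic, every conjugacy class is a singleton, and the $G$-invariant cochains decompose into twisted sectors
\[
C^\bullet_c(A,A[G])^G \;=\; \bigoplus_{g\in G}\bigl(C^\bullet_c(A,A\cdot g)\bigr)^{G}.
\]
I would then use a $G$-equivariant HKR-type quasi-isomorphism (adapted to the curved twisted setting via the machinery of \cite{HLL18}) to replace each sector by a polyvector model supported on the fixed locus $\mathrm{Fix}(g)\subset\mathds{C}^2$.

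For the untwisted sector $g=\e$, the fixed locus is all of $\mathds{C}^2$ and the differential $\delta_b$ identifies, under HKR, with contraction by $dW$ on polyvector fields. Because $W$ is invertible the pair $(\partial_x W,\partial_y W)$ is a regular sequence, so the Koszul complex is acyclic except at the bottom and the surviving cohomology is $\mathrm{Jac}(W)^G$, which lies in even $\mathds{Z}/2$-degree. For every non-trivial $g=\mathrm{diag}(\zeta,\zeta^{-1})$, the fixed locus collapses to the origin and the model reduces to $\Lambda^\bullet N_g^*|_0$; since $W|_0 = 0$ and $dW|_0=0$ the induced differential is trivial and only the top class $\det N_g^*$ survives after imposing $G$-invariance. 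The Calabi--Yau hypothesis $G\subset \mathrm{SL}(2,\mathds{C})$ is exactly what forces this surviving class to sit in even $\mathds{Z}/2$-degree (the age of each non-trivial $g$ is an integer). Hence every cohomology class in $HH^\bullet_c(A,A[G])^G$ sits in even parity, i.e.\ $HH^{\mathrm{odd}}_c(A,A[G])^G=0$.

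To finish, I would invoke the parity of the bracket: the shifted Lie bracket $\{\,,\,\}$ on Hochschild cochains has total parity $1$, so $\{HH^{\mathrm{even}},HH^{\mathrm{even}}\}\subset HH^{\mathrm{odd}}=0$. Therefore the induced bracket on $HH^\bullet_c(A,A[G])^G$ is identically zero, which is precisely the statement that the shifted dgLa is homological abelian.

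The main obstacle I expect is the verification of the $G$-equivariant HKR quasi-isomorphism in the curved, twisted setting, together with the careful identification of the parity of each twisted sector. The first part is essentially a check that known HKR formulas transport through the constructions of \cite{HLL18}; the second reduces to the combinatorial observation that the CY condition aligns all twisted contributions to even $\mathds{Z}/2$-degree. Once both are established, the conclusion is immediate from the degree count above.
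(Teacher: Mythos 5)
Your overall strategy coincides with the paper's: replace $C^\bullet_c(A,A[G])^G$ by a small sector-by-sector model (the paper uses the explicit homotopy retraction of its appendix onto $\bigoplus_{g}\jac(W_g)^G[-l_g]$ with $l_\e=0$, $l_g=2$ for $g\neq\e$; your HKR/Koszul description lands in the same place), observe that the Calabi--Yau condition forces every nontrivial sector to be concentrated in codimension $2$, hence even parity, and then kill the bracket by a parity count. The identification of the cohomology is essentially identical in both arguments.

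There is, however, a genuine gap in your last step. You prove that the \emph{induced binary bracket on cohomology} vanishes, and then assert that ``this is precisely the statement that the shifted dgLa is homological abelian.'' It is not: homological (homotopy) abelianness means the dgLa is $L_\infty$-quasi-isomorphic to an abelian one, which requires the vanishing of \emph{all} transferred higher operations $\ell_k$ (equivalently, all Massey--Lie brackets), not just $\ell_2$ on cohomology. A dgLa can have trivial induced bracket on cohomology and still be obstructed through higher Massey products; since the theorem is used downstream to conclude that the Maurer--Cartan deformation problem is unobstructed and the moduli space is smooth, the distinction matters. The repair is cheap and is exactly what the paper does: apply the homotopy transfer theorem along your HKR retraction to obtain the full transferred shifted $L_\infty$ structure $(\ell_2,\ell_3,\dots)$ on $\jac(W,G)=\bigoplus_g\jac(W_g)^G[-l_g]$, and note that in the shifted convention every $\ell_k$ has odd degree, while the target is concentrated in even degree; hence every $\ell_k$ vanishes identically, not merely $\ell_2$. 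With that amendment your argument closes. (A minor side remark: the relevant parity input for $g\neq\e$ is that the fixed locus has codimension $2$, which is what $G\subset\mathrm{SL}(2,\mathds{C})$ guarantees by excluding eigenvalue $1$ on a nontrivial $g$; the age of such a $g$ is $1$, which is an integer but odd, so ``age is an integer'' is not by itself the reason the class lands in even degree.)
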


\begin{proof}
Use the cochain version of the explicit homotopy retraction constructed in appendix \ref{appendix: A},
\[\begin{tikzpicture}
\matrix (a) [matrix of math nodes, row sep=3.5em, column sep=4em, text height=2ex, text depth=0.9ex]
{\big(C^{\bullet}_c(\bar{A},Ag)^G,\delta_b\big) & \big(\jac(W_g)^G[-l_g],\mathsf{0}\big),\\};
\path[-latex,semithick]
(a-1-1) edge (a-1-2);
\path[latex-,semithick]
(a-1-1.-5) edge (a-1-2.-176);
\draw[-latex,semithick]
(a-1-1.-170)  arc (-40:-320:11pt) -- ++ (.8pt,-.8pt);
\end{tikzpicture}\]
where $W_g \coloneqq W|_{\fix(g)}$ and
\[l_g = \begin{dcases}
0,g=\e,\\
2,g\neq \e.
\end{dcases}\]
By homotopy transfer theorem, we can define a shifted \linfty structure on the later, such that there exists a quasi-isomorphism between shifted \linfty algebras,
\[\big(C^\bullet_c(\bar{A},A[G])^G,\delta_b,\{\,,\,\}\big)\simeq \big(\jac(W,G),\mathsf{0},\ell_2+\ell_3+\cdots\big),\]
where
\[\jac(W,G) \coloneqq \bigoplus_{g\in G}\jac(W_g)^G[-l_g].\]
Notice that the degrees of $\ell_k$ are all odd, while $\jac(W,G)$ is concentrated in even degrees. Hence, all of those $\ell_k$'s are zero and $\big(C^\bullet_c(A,A[G])^G,\delta_b,\{\,,\,\}\big)$ is homological abelian.
\end{proof}

We can solve the Maurer-Cartan equation
\begin{equation}
\delta_b(\phi) +\f{1}{2}\{\phi,\phi\}=0,
\end{equation}
by the quasi-isomorphism defined above with homotopy transfer. With respect to the tensor grading, the homotopies on Hochschild and Koszul cochains are all of degree $-1$ , and the homotopy on polyvector fields shall not give terms of degree greater than $2$. Therefore, Maurer-Cartan elements are in the form of $\phi=\phi_0+\phi_2\in C^{\text{even}}_c(A_W,A_W[G])^G$, where $\phi_0$ gives a deformation of $b_0$ and $\phi_2$ gives a deformation of $b_2$. The miniversal deformation space of $A_W[G]$ is denoted by $\mathrm{Def}(A_W,G)$.

\begin{coro}
Under the same assumption as Theorem \ref{thm: non-obs}, a basis of $\hh^\bullet_c(A_W,A_W[G])^G$ will give a parametrization of a formal neighbourhood of the origin in $\mathrm{Def}(A_W,G)$.
\end{coro}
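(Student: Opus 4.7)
The plan is to translate the homological abelianness established in Theorem~\ref{thm: non-obs} directly into a statement about the formal moduli. Since the shifted dgLa $(C^\bullet_c(A,A[G])^G,\delta_b,\{\,,\,\})$ has been shown to be \linfty quasi-isomorphic to $(\jac(W,G),\mathsf{0},\mathsf{0})$ with all higher brackets vanishing, the whole deformation problem becomes controlled by an abelian object, and the formal moduli space is automatically smooth.

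First I would recall the standard consequence of Goldman--Millson / Kontsevich formality: an \linfty quasi-isomorphism of shifted dgLa's induces an isomorphism of the associated formal Maurer--Cartan moduli functors. Applied to the quasi-isomorphism from Theorem~\ref{thm: non-obs}, Maurer--Cartan elements of $(C^\bullet_c(A,A[G])^G,\delta_b,\{\,,\,\})$ modulo gauge are in natural bijection with Maurer--Cartan elements of $(\jac(W,G),\mathsf{0},\mathsf{0})$ modulo gauge. On the target side both the differential and all brackets vanish, so every element of the appropriate degree solves the MC equation trivially and the gauge action is trivial as well.

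Next I would identify which degree contributes. As observed in the paragraph preceding the corollary, the homotopy retraction has degree $-1$ and the transferred brackets on polyvector fields do not raise the tensor degree beyond $2$, so any MC element must be of the form $\phi = \phi_0 + \phi_2$ in even total degree, matching $\jac(W,G)$, which is concentrated in even degrees. Hence an arbitrary basis of $\hh^\bullet_c(A_W,A_W[G])^G \cong \jac(W,G)$ gives a set of flat formal coordinates on the target side. Pushing back through the sum-over-trees formula of homotopy transfer (which terminates because the target brackets vanish), each basis element extends to a genuine MC element on the chain level, and together they parametrize a formal neighbourhood of $0$ in the MC moduli space.

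Finally I would check miniversality, i.e.\ that this parametrization coincides with $\mathrm{Def}(A_W,G)$: unobstructedness gives smoothness, the Kodaira--Spencer map is the identity on $\hh^\bullet_c(A_W,A_W[G])^G$ by construction of the quasi-isomorphism, and so the universal property of miniversal deformations is satisfied. The main subtlety here is the bookkeeping of signs and degrees in the homotopy transfer to confirm that only the even part of the cohomology enters and that the transferred family genuinely deforms $b_0$ and $b_2$ of the curved algebra $A_W[G]$; everything else follows from the formality already proved.
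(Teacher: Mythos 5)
Your proposal is correct and follows essentially the same route as the paper: the \linfty quasi-isomorphism from Theorem \ref{thm: non-obs} identifies the Maurer--Cartan moduli with the abelian target $\jac(W,G)$, the degree bookkeeping forces $\phi=\phi_0+\phi_2$, and homotopy transfer produces the universal family whose Kodaira--Spencer map is the identity on $\hh^\bullet_c(A_W,A_W[G])^G$. The paper treats all of this as immediate from the preceding paragraph rather than writing out a separate proof, so your more explicit invocation of Goldman--Millson and the sum-over-trees formula is just a fuller account of the same argument.
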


Notice that there is a decomposition,
\[\hh^\bullet_c(A_W,A_W[G])^G = \Big(\bigoplus_{g \neq \e }\hh^\bullet_c(A_W,A_Wg)\Big)^G\oplus \hh^\bullet_c(A_W,A_W\e)^G.\]
We call the first summand the twisted sector and the second the untwisted sector.

We can identify the formal neighbourhood of the origin in $\mathrm{Def}(A_W,G)$ with $\mathrm{Spec}(\mathds{C}[[\bm{\tau},\bm{s}]])$, where variables $\bm{\tau}$ parameterize deformations in the untwisted sector and $\bm{s}$ parameterize deformations in the twisted sectors.  Let $\mathcal{A}(\bm{\tau},\bm{s})$ denote the corresponding deformed curved algebra, with $b(\bm{\tau},\bm{s})$ the deformed \ainfty-products. Miniversality says that the following Kodaira-Spencer map is an isomorphism
\begin{align*}\ks\colon \text{Span}_{\mathds{C}}\Big\{{\pd{}{\bm{\tau}}}, {\pd{}{\bm{s}}}\Big\} &\to \hh^\bullet_{c}(A_W,A_W[G])^G,\\
v\quad\quad&\mapsto  \big[v(b(\bm{\tau},\bm{s}))|_{\bm{\tau}=\bm{s}=\bm{0}}\big].
\end{align*}

Also notice that the first order deformation along the untwisted sector deforms the superpotential, while that along the twisted sector deform the product of the semi-direct product polynomial ring $\mathds{C}[x,y][G]$.

\begin{rem}In \cite{N98}, Nadaud introduced three different forms of ``$q$-Moyal products'' as $q$-deformations of $\mathds{C}[x,y]$. By choosing different homotopy retractions between the Hochschild cochains and Koszul cochains, one can recover these $q$-Moyal products
and his rigidity indicates that our deformation is in fact somewhat canonical. One should also notice that our deformation is equivalent to that constructed by Halbout, Oudom and Tang in \cite{HOT11}. Especially, their modified superpotential is the same as our deformed $b_0$ via homotopy transfer.
\end{rem}

E. Getzler \cite{G93} introduced higher operations $\bm{b}$ and $\bm{B}$ on $C^\bullet(A,A)$ with values in $\mathrm{End}(C_\bullet(\bar{A},A))$ for any \ainfty algebra $A$, as extensions of Hochschild differential $\p_{b}$ and Connes operator $B$. Here $\bar{A} = \mathds{C}[x,y]/\mathds{C}$. For homogeneous $\phi_1,\cdots,\phi_n\in C^\bullet(A,A)$ and $a_0[a_1|\cdots |a_m]\in C_m(\bar{A},A)$, he defined for $n\geqslant 1$,
\begin{align}
&\bm{b}\{\phi_1,\cdots,\phi_n\}(a_0[a_1|\cdots |a_m])\nonumber\\
\coloneqq &\!\! \sum_{\bm{J}\in\mathcal{J}}\sum_{l\geqslant 1} (-1)^{\eta_{\bm{J}}} b_l[a_{j_0+1}|\cdots|\phi_1[a_{j_1+1}|\cdots]|\cdots |\phi_n[a_{j_n+1}|\cdots]|\cdots|a_0|\cdots][\cdots |a_{j_0}],
\end{align}
where
\[\mathcal{J} = \left\{\bm{J} = (j_0,\cdots,j_n)\left| \begin{dcases}&m-(l-1)-\sum_{k=1}^n \abs{\phi_k}+n\leqslant j_0\leqslant j_1,j_n +\abs{\phi_n}\leqslant m,\\
&j_k+\abs{\phi_k}\leqslant j_{k+1}, \forall 1\leqslant k\leqslant n-1.\end{dcases}\right.\right\}, \]
and
\begin{align}
\eta_{\bm{J}} = \varepsilon_{j_0}(\varepsilon_{m}-1)+\sum_{k=1}^n (\abs{\phi_k}-1)(\varepsilon_{j_k}-\varepsilon_{j_0}),
\end{align}
and
\begin{align}
&\bm{B}\{\phi_1,\cdots,\phi_n\}(a_0[a_1|\cdots |a_m])\nonumber\\
\coloneqq &\!\! \sum_{\bm{J}\in\mathcal{J}}\sum_{l\geqslant 1} (-1)^{\eta_{\bm{J}}} 1[a_{j_0+1}|\cdots|\phi_1[a_{j_1+1}|\cdots]|\cdots |\phi_n[a_{j_n+1}|\cdots]|\cdots|a_0|\cdots |a_{j_0}].
\end{align}
For $n=0$, $\bm{b}\{\}\coloneqq \p_{b}$ and $\bm{B}\{\} \coloneqq B$. Using $\bm{b}$ and $\bm{B}$, he defined the Getzler-Gauss-Manin system for formal deformations of an \ainfty algebra.

We will first extend his constructions to the case of $G$-twisted chains $C_\bullet^c(\bar{A},A[G])_G$, which is much smaller than the space of reduced chains $C_\bullet^c(\overline{A[G]},A[G])$. This will greatly simplify the calculation of the connection in practice.

Consider chain maps $\Psi_*$ and $\Gamma_*$ between $C_\bullet^c(A,A[G])_G $ and $ C_\bullet^c(A[G],A[G])$ defined as follows. For $a_0g_0[a_1g_1|\cdots |a_pg_p]\in C_p(A[G],A[G])$ and $a_0g_0[a_1|\cdots|a_p]\in C_p(A,A[G])$,
\begin{align}
\Gamma_*\circ\pi(a_0g_0[a_1|\cdots|a_p])& \coloneqq \f{1}{\abs{G}}\sum_{g\in G} {}^g a_0gg_0g^{-1}[{}^g a_1 e|\cdots | {}^g a_p e],\\
\Psi_*(a_0g_0[a_1g_1|\cdots|a_pg_p]) & \coloneqq \pi\big({}^{g_1\cdots g_p}a_0g_1\cdots g_pg_0[a_1|{}^{g_1}a_2|\cdots|{}^{g_1\cdots g_{p-1}}a_p]\big).
\end{align}
Here, $\pi \colon C_\bullet(A,A[G]) \to C_\bullet(A,A[G])_G$ is the natural projection. One can easily check that $\Gamma_*$ is well-defined.

A Getzler-Gauss-Manin connection is a connection defined in terms of the mixed complex $(C_\bullet, \p ,B)$ of a deformed \ainfty algebra, for example, on
\[\big(C_\bullet^c(\overline{\mathcal{A}(\bm{\tau},\bm{s})},\mathcal{A}(\bm{\tau},\bm{s})),\p_{b(\bm{\tau},\bm{s})},B\big) = \big(C_\bullet^c(\overline{A[G]},A[G])[[\bm{\tau},\bm{s}]],\p_{b(\bm{\tau},\bm{s})},B\big).\]
On twisted chains, we can also define such a mixed complex by defining
\begin{align}
\tilde{\p}_{b(\bm{\tau},\bm{s})}& \coloneqq \Psi_*\circ \p_{b(\bm{\tau},\bm{s})}\circ \Gamma_*,\\
\tilde{B} &\coloneqq \Psi_* \circ B \circ \Gamma_*.
\end{align}
More explicitly, if we write $b(\bm{\tau},\bm{s})\in C^\bullet(A, A[G])$ as
\[ b(\bm{\tau},\bm{s}) = \sum_{h\in G} (b^h_2 h-W^hh),\]
with $b^h_2\in C^2(A,A)[\bm{s}]$ and $W^h\in A[\bm{\tau},\bm{s}]$, we can write
\[\tilde{\p}_{b(\bm{\tau},\bm{s})} = \tilde{\p}_{b_2(\bm{\tau},\bm{s})}+\tilde{\p}_{b_0(\bm{\tau},\bm{s})},\]
where
\begin{align}
&\tilde{\p}_{b_2(\bm{\tau},\bm{s})}\circ \pi(a_0g_0[a_1|\cdots|a_p]) \nonumber\\
=& \sum_{h\in G}\pi\bigg(b_2^h[a_0|{}^{g_0}a_1]hg_0[a_2|\cdots|a_p]+(-1)^p b_2^h[a_p|a_0]hg_0[a_1|\cdots|a_{p-1}]\nonumber\\
&+ \sum_{k=0}^{p-2}(-1)^{k+1}{}^ha_0hg_0[a_1|\cdots|b^h_2[a_{k+1}|a_{k+2}]|{}^ha_{k+3}|\cdots|{}^h a_p]\bigg),
\end{align}
and
\begin{align}
\tilde{\p}_{b_0(\bm{\tau},\bm{s})}\circ \pi(a_0g_0[a_1|\cdots|a_p])=\sum_{h\in G}\pi\bigg(\sum_{k=0}^{p-2}(-1)^{k}{}^ha_0hg_0[\cdots|a_k|W^h|{}^ha_{k+1}|\cdots|{}^h a_p]\bigg).
\end{align}
Similarly,
\begin{align}
\tilde{B}\circ \pi(a_0g_0[a_1|\cdots|a_p]) \coloneqq \pi\bigg(\sum_{k=0}^p(-1)^{kp}1g_0[a_k|\cdots|a_p|a_0|{}^{g_0}a_1|\cdots|{}^{g_0}a_{k-1}]\bigg).
\end{align}
One can directly check that
\[\big(C_\bullet^c(\bar{A},A[G])_G[[\bm{\tau},\bm{s}]],\tilde{\p}_{b(\bm{\tau},\bm{s})},\tilde{B}\big)\]
is a mixed complex. Furthermore, $\Psi_*$ will induce a morphism between mixed complexes,
\[\Psi_*\colon \big(C_\bullet^c(\overline{A[G]},A[G])[[\bm{\tau},\bm{s}]],\p_{b(\bm{\tau},\bm{s})},B\big) \to \big(C_\bullet^c(\bar{A},A[G])_G[[\bm{\tau},\bm{s}]],\tilde{\p}_{b(\bm{\tau},\bm{s})},\tilde{B}\big).\]

\begin{lem}\label{lem: chain map compatiablity}
$\tilde{\p}_{b(\bm{\tau},\bm{s})}$ and $\tilde{B}$ defined above can be extended to higher operations $\tilde{\bm{b}}(\bm{\tau},\bm{s})$ and $\tilde{\bm{B}}$ on $C^\bullet_c(A,A[G])^G[[\bm{\tau},\bm{s}]]$ with values in $\mathrm{End}(C^c_\bullet(\bar{A},A[G])_G[[\bm{\tau},\bm{s}]]$, such that these higher operations are also compatible with $\Psi_*$.
\end{lem}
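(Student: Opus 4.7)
The strategy is to transport Getzler's higher operations to the $G$-twisted setting via a natural lift $\Phi^* \colon C^\bullet_c(A,A[G])^G \to C^\bullet_c(A[G],A[G])$ paired with $(\Psi_*, \Gamma_*)$. For $\phi = \sum_{g \in G} \phi^g g \in C^\bullet_c(A,A[G])^G$ define
\[
(\Phi^*\phi)[a_1g_1|\cdots|a_pg_p] \coloneqq \phi^{g_1\cdots g_p}[a_1|{}^{g_1}a_2|\cdots|{}^{g_1\cdots g_{p-1}}a_p]\, g_1\cdots g_p.
\]
This is the cochain-side counterpart of $\Psi_*$, and it implements the Gerstenhaber algebra isomorphism $\hh^\bullet_c(A_W,A_W[G])^G \cong \hh^\bullet_c(A_W[G],A_W[G])$ from \cite{HLL18}. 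With $\Phi^*$ in hand, I would set
\[
\tilde{\bm b}\{\phi_1,\ldots,\phi_n\} \coloneqq \Psi_* \circ \bm b\{\Phi^*\phi_1,\ldots,\Phi^*\phi_n\} \circ \Gamma_*,
\]
\[
\tilde{\bm B}\{\phi_1,\ldots,\phi_n\} \coloneqq \Psi_* \circ \bm B\{\Phi^*\phi_1,\ldots,\Phi^*\phi_n\} \circ \Gamma_*,
\]
which at $n=0$ reduces, after a short check, to the $\tilde{\p}_{b(\bm\tau,\bm s)}$ and $\tilde B$ already displayed above.

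The first step is to verify that $\bm b\{\Phi^*\phi_1,\ldots,\Phi^*\phi_n\}$ (and similarly for $\bm B$) descends through $\pi$ to a well-defined endomorphism of $C^c_\bullet(\bar A,A[G])_G$. Each $\Phi^*\phi_k$ is $G$-equivariant, by the $G$-invariance of $\phi_k$ together with the diagonal structure of the lift; Getzler's expressions are assembled purely from cyclic rotation of inputs, substitution of the $\phi_k$'s, and application of a single $b_l$, each of which commutes with the diagonal $G$-action on $C_\bullet^c(A[G],A[G])$. Consequently post-composition with $\Psi_*$ sends equivalent representatives to the same twisted chain, so the definition passes to the quotient.

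The compatibility with $\Psi_*$ is then built into the definition: one has
\[
\tilde{\bm b}\{\phi_1,\ldots,\phi_n\} \circ \Psi_* = \Psi_* \circ \bm b\{\Phi^*\phi_1,\ldots,\Phi^*\phi_n\},
\]
and analogously for $\tilde{\bm B}$. Unfolding the right-hand side on a representative $\pi(a_0g_0[a_1|\cdots|a_m])$ produces a closed formula of the same shape as Getzler's: a sum over the index set $\mathcal J$ with an accumulated group element $h = h_1\cdots h_n g_0$ acting on those entries $a_j$ that are cycled past $a_0$, and with the Koszul signs $\eta_{\bm J}$ inherited verbatim since $A$ sits in $\mathds Z/2$-degree zero.

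The main obstacle is the combinatorial bookkeeping of these accumulated group elements: one must verify that the final expression is independent of both the chosen representative in $C^c_\bullet(\bar A,A[G])_G$ and the decomposition $\phi_k = \sum_{h_k} \phi_k^{h_k} h_k$. Once this is done, all of Getzler's identities linking $\bm b$ and $\bm B$ are transported to $\tilde{\bm b}$ and $\tilde{\bm B}$ automatically by conjugation with $\Psi_*$ and $\Gamma_*$, and the lemma follows.
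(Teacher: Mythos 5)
Your overall strategy coincides with the paper's: define $\tilde{\bm b}$ and $\tilde{\bm B}$ by conjugating Getzler's operations with $\Psi_*$ and $\Gamma_*$ after lifting the twisted cochains to $C^\bullet_c(A[G],A[G])$ via the cochain map of \cite{HLL18}, so that compatibility with $\Psi_*$ is essentially built into the definition. The paper does exactly this, setting $\tilde{\bm b}(\bm\tau,\bm s)\{\phi_1,\dots,\phi_n\}=\Psi_*\circ\bm b(\bm\tau,\bm s)\{\Psi^*(\phi_1),\dots,\Psi^*(\phi_n)\}\circ\Gamma_*$ and leaving the identity $\tilde{\bm b}\{\phi_1,\dots,\phi_n\}\circ\Psi_*=\Psi_*\circ\bm b\{\Psi^*(\phi_1),\dots,\Psi^*(\phi_n)\}$ as a direct check, so your level of detail on descent through $\pi$ is comparable to theirs.

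However, your formula for the lift is wrong. The map used in the paper (and in \cite{HLL18}) is
\[
\Psi^*(\phi)[a_1g_1|\cdots|a_pg_p]=\phi[a_1|{}^{g_1}a_2|\cdots|{}^{g_1\cdots g_{p-1}}a_p]\,g_1\cdots g_p=\sum_{h\in G}\phi^h[a_1|{}^{g_1}a_2|\cdots|{}^{g_1\cdots g_{p-1}}a_p]\,hg_1\cdots g_p,
\]
that is, the full value $\phi[\cdots]\in A[G]$ is retained and then right-multiplied by $g_1\cdots g_p$ in $\mathds{C}[G]$. Your $\Phi^*$ instead selects the single component $\phi^{g_1\cdots g_p}$ and attaches the group element $g_1\cdots g_p$. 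These maps do not agree (already for $G=\mathds{Z}_2$ the components $\phi^{\e}$ and $\phi^{\sigma}$ are independent, conjugation being trivial), and yours is not the map realizing $\hh^\bullet_c(A_W,A_W[G])^G\cong\hh^\bullet_c(A_W[G],A_W[G])$. Concretely, at arity $0$ your formula returns only $\phi^{\e}$, so the lift of the deformed curvature $b_0(\bm\tau,\bm s)=-\sum_h W^h h$ loses every twisted-sector component; at arity $2$ the lift of $s\,\p_x^\sigma\p_y^\sigma$ evaluated on $[a_1\e|a_2\e]$ vanishes instead of giving $s\,\p_x^\sigma\p_y^\sigma[a_1|a_2]$. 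Consequently the $n=0$ case would not reduce to the operators $\tilde{\p}_{b(\bm\tau,\bm s)}$ and $\tilde B$ displayed before the lemma (whose formulas sum over $h\in G$), so the ``short check'' you invoke fails. Replacing your $\Phi^*$ by the correct $\Psi^*$ repairs the argument and makes it identical to the paper's.
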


\begin{proof}
Similar as above, for homogeneous $\phi_1,\cdots,\phi_n\in C^\bullet(A,A[G])^G$, we define
\begin{align}
\tilde{\bm{b}}(\bm{\tau},\bm{s})\{\phi_1,\cdots,\phi_n\}\coloneqq &  \Psi_*\circ \bm{b}(\bm{\tau},\bm{s})\{\Psi^*(\phi_1),\cdots,\Psi^*(\phi_n)\}\circ \Gamma_*,\\
\tilde{\bm{B}}\{\phi_1,\cdots,\phi_n\}\coloneqq &  \Psi_*\circ \bm{B}\{\Psi^*(\phi_1),\cdots,\Psi^*(\phi_n)\}\circ \Gamma_*.
\end{align}
Here, $\bm{b}(\bm{\tau},\bm{s})$ and $\bm{B}$ are the higher operations on $C^\bullet_c(A[G],A[G])[[\bm{\tau},\bm{s}]]$ with values in $\mathrm{End}(C^c_\bullet(\overline{A[G]},A[G])[[\bm{\tau},\bm{s}]]$ extending $\p_{b(\bm{\tau},\bm{s})}$ and $B$ and $\Psi^*$ is the cochain maps we constructed in \cite{HLL18} such that
\begin{equation}
\Psi^*(\phi)[a_1g_1|\cdots|a_pg_p] = \phi[a_1|{}^{g_1}a_2|\cdots |{}^{g_1\cdots g_{p_1}}a_p]g_1\cdots g_p.
\end{equation}
One can also directly check that
\begin{align*}
\tilde{\bm{b}}(\bm{\tau},\bm{s})\{\phi_1,\cdots,\phi_n\}\circ \Psi_* =& \Psi_* \circ \bm{b}(\bm{\tau},\bm{s})\{\Psi^*(\phi_1),\cdots,\Psi^*(\phi_n)\},\\
\tilde{\bm{B}}\{\phi_1,\cdots,\phi_n\} \circ \Phi_* =& \Psi_* \circ \bm{B}\{\Psi^*(\phi_1),\cdots,\Psi^*(\phi_n)\}.
\end{align*}
\end{proof}

By Getzler's computation \cite{G93}, there is a connection flat up to homotopy defined as
\begin{align}
\nabla \colon  \mathds{C}[[\bm{\tau},\bm{s}]]\Big[\pd{}{\bm{\tau}}, \pd{}{\bm{s}}\Big] &\to \mathrm{End}_{\mathds{C}}\Big(C^c_\bullet(\overline{A[G]},A[G])[[\bm{\tau},\bm{s}]]((u))\Big),\nonumber\\
\nabla_{v} &\coloneqq v-\f{1}{u}\bm{b}(\bm{\tau},\bm{s})\{v(b(\bm{\tau},\bm{s}))\} - \bm{B}\{v(b(\bm{\tau},\bm{s}))\},
\end{align}
which induces a flat connection on
\[\hp^c_\bullet(\mathcal{A}(\bm{\tau},\bm{s})) \coloneqq \mathrm{H}_\bullet\Big(C^c_\bullet(\overline{A[G]},A[G])[[\bm{\tau},\bm{s}]]((u)),\p_{b(\bm{\tau},\bm{s})}+uB\Big).\]
$\nabla$ is flat up to homotopy on chain level, which means that $\nabla$ is flat on homologies. Thus, there is a flat connection
\begin{align}
\nabla \colon  \mathds{C}[[\bm{\tau},\bm{s}]]\Big[\pd{}{\bm{\tau}}, \pd{}{\bm{s}}\Big]  &\to \mathrm{End}_{\mathds{C}}\Big(\hp^c_\bullet(\mathcal{A}(\bm{\tau},\bm{s}))\Big),\nonumber\\
v &\mapsto[\nabla_v(-)].
\end{align}
We can give a similar definition for $\nabla$ on $G$-twisted chains just by replacing $\bm{b}$ and $\bm{B}$ with $\tilde{\bm{b}}$ and $\tilde{\bm{B}}$.  This will also induce a flat connection on
\[\mathrm{H}_\bullet \big(C_\bullet^c(\bar{A},A[G])_G[[\bm{\tau},\bm{s}]]((u)), \tilde{\p}_{b(\bm{\tau},\bm{s})}+u\tilde{B}\big),\]
by the same reason. Lemma \ref{lem: chain map compatiablity} and the fact that $\Psi_*$ is a quasi-isomorphism \cite{HLL18} explain why these two constructions define the same flat connection on the periodic cyclic homology compatible with the Hodge filtration.

\section{An Example: \texorpdfstring{$ (A_{2n-1},\mathds{Z}_2)$}) Cases}
In this section, we will write down the Getzler-Gauss-Manin system on the miniversal deformation of $A_{2n-1}$ type orbifold explicitly. Here $W= x^{2n}+y^2$ ($n \geqslant 2$), the orbifold group is $G=\mathds{Z}_2$, whose generator ${\sigma}$ acts on $x, y$ by ${}^\sigma x = -x, {}^\sigma y=-y$. The result turns out to coincide with the Gauss-Manin system on the miniversal deformation of $D_{n+1}$ singularity.  This establishes an example of crepant resolution conjecture for LG B-models over miniversal deformations. In fact,  if we lift the superpotential $W$ to the crepant resolution of $\mathds{C}^2/\mathds{Z}_2$, which is the total space of $\mathcal{O}_{\bm{P}^1}(-2)$, it will have an isolated singularity of $D_{n+1}$ type on the exceptional $\bm{P}^1$.

\noindent \textbf{Computation of A-type orbifolds: $W= x^{2n}+y^2, G=\mathds{Z}_2$}

Since $\jac(W,G) = \jac(W)^G e\oplus \mathds{C}\sigma[-2]$, the formal neighbourhood of the origin in $\mathrm{Def}(A_W,G)$ can be parameterized as $\mathrm{Spec}(\mathds{C}[[\tau_0,\tau_1,\cdots,\tau_{n-1},s]])$ and the deformed curved algebra $\mathcal{A}(\bm{\tau},s)$ is $\mathds{C}[[\bm{\tau},s]][x,y]\otimes \mathds{C}[G]$ with $b(\bm{\tau},s)$ given by
\begin{equation}\begin{dcases}
  b_l(\bm{\tau},s) = 0, \text{ if }l\neq 0,2,\\
  b_2(\bm{\tau},s)  = b_2+ s\p_x^\sigma\p_y^\sigma,\\
   b_0(\bm{\tau},s) = -W_{\bm{\tau}} \coloneqq -x^{2n} -y^2 - \sum\limits_{k=0}^{n-1} \tau_k x^{2k}.
\end{dcases}\end{equation}
Here $\p_x^\sigma$ and $\p_y^\sigma$ are the quantum differential operators defined in \cite{HLL18}. As a Hochschild cochain,
\[\p^\sigma_x\p^\sigma_y[x^{a_1}y^{b_1}|x^{a_2}y^{b_2}] \coloneqq\begin{dcases} (-1)^{a_2}x^{a_1+a_2-1}y^{b_1+b_2-1}\sigma,& \text{ if } a_1,b_2 \text{ odd,}\\
0,&\text{ else.}
\end{dcases}
\]


Henceforth, we will write
\begin{equation}
b(\bm{\tau},s) = b_2 + s b_\sigma + b_0(\bm{\tau}),
\end{equation}
with $b_\sigma = \p^\sigma_x\p^\sigma_y$ and $b_0(\bm{\tau}) \coloneqq b_0(\bm{\tau},s)$. Hence, the deformed Hochschild differential is
\begin{equation}
\tilde{\p}_{b(\bm{\tau},s)} = \tilde{\p}_{b_2} + s \tilde{\p}_{b_\sigma} + \tilde{\p}_{b_0(\bm{\tau})}.
\end{equation}

In these cases, $\mathrm{HP}^c_\bullet(\mathcal{A}(\bm{\tau},s))$ equals to
\begin{equation}
\lr{\jac(W)_G[2] \oplus \jac(W_\sigma)}[[\bm{\tau},s]]((u)) = \Big(\bigoplus_{k=0}^{n-1}\mathds{C} [x^{2k}][2] \oplus \mathds{C}[1\sigma]\Big)[[\bm{\tau},s]]((u)),
\end{equation}
as a $\mathds{Z}_2$-graded $\mathds{C}[[\bm{\tau},s]]((u))$-module with a flat connection $\nabla$ we have defined in the last section.

Using perturbed homotopy retraction \cite{C04} constructed by homotopy retractions defined in the appendix, the chain representations for $[x^{2k}][2]\in \jac(W)_G[2]$ and $[1_\sigma]\in \jac(W_\sigma)$ can be written in the following forms,
\[\begin{dcases}
\ha_{2k} &= \ha^{(2)}_{2k} + \ha^{(4)}_{2k} +\cdots,\\
\hb & = \hb^{(0)} + \hb^{(2)} +\cdots,
\end{dcases}\]
where $\ha^{(p)}_{k}, \hb^{(p)}\in C_{p}(\bar{A},A[G])_G[\bm{\tau},s,u]$ are defined as follows. For $\forall 0\leqslant k\leqslant n-1$, take $\ha^{(2)}_{2k} = x^{2k}[x|y]-x^{2k}[y|x]$ and $\hb^{(0)} = 1\sigma$; and for $l\geqslant 2$, we define
  \begin{align}
  \ha^{(2l)}_{2k} = -\sum_{i\geqslant 0}\big(-s(\ho_C+\Phi \ho_K\Upsilon)\tilde{\p}_{b_\sigma}\big)^i(\ho_C+\Phi \ho_K\Upsilon)(\tilde{\p}_{b_0(\bm{\tau})}+u\tilde{B})\ha^{(2l-2)}_{2k}.
  \end{align}
  Similarly, $\forall l\geqslant 1$, we define
  \begin{align}
  \hb^{(2l)} = -\sum_{i\geqslant 0}\big(-s(\ho_C+\Phi \ho_K\Upsilon)\tilde{\p}_{b_\sigma}\big)^i(\ho_C+\Phi \ho_K\Upsilon)(\tilde{\p}_{b_0(\bm{\tau})}+u\tilde{B})\hb^{(2l-2)}.
  \end{align}

The lower degree terms of $\ha_{2k}$ and $\hb$ which will be used in further calculation are given by
\begin{small}
\begin{align}
\ha_{2k}^{(2)} = & x^{2k}[x\wedge y],\label{eq: alpha2}\\
\ha_{2k}^{(4)} = & \sum_{j=1}^{n} \sum_{i=0}^{2j-2} \tau_{j} \big(x^{i+2k}[x|x^{2j-i-1}|x\wedge y]+x^{i+2k}[x\wedge y|x^{2j-i-1}|x]\big)\label{eq: alpha4}\\
&+x^{2k}[x\wedge y|y|y] + x^{2k}[y|y|x\wedge y] +u\sum_{i=0}^{2k-2} \big(x^i[x\wedge y|x^{2k-i-1}|x]+x^i[x|x^{2k-i-1}|x\wedge y]\nonumber\\
\hb^{(0)} = & 1\sigma,\label{eq: beta0}\\
\hb^{(2)} = & \sum_{j=1}^{n} \sum_{i=0}^{2j-2}\tau_{j} x^i\sigma[x^{2j-i-1}|x] +1\sigma[y|y],\label{eq: beta2}\\
\hb^{(4)} = & -u\ho_C\tilde{B}(\hb^{(2)}) + \sum_{j,j'=1}^{n} \sum_{i=0}^{2j-2}\sum_{i'=0}^{2j'-2}\tau_{j} \tau_{j'}x^{i+i'}\sigma[x^{2j-i-1}|x|x^{2j'-i'-1}|x]\label{eq: beta4}\\
& +\sum_{j=1}^n\sum_{i=0}^{2j-2} \tau_{j} \big(x^i\sigma[x^{2j-i-1}|x|y|y] - x^i\sigma[x^{2j-i-1}|y|x\wedge y]\nonumber\\
&+ \sum_{j=1}^n\sum_{i=0}^{2j-2} \tau_{j} \big(x^i\sigma[y|x^{2j-i-1}|x\wedge y]+x^i\sigma[y|y|x^{2j-i-1}|x]\big)\nonumber\\
& + 1\sigma[y|y|y|y].\nonumber
\end{align}
\end{small}
Here, for the sake of simplicity, we denote $\tau_{n}=1$ and $\cdots|x\wedge y|\cdots = \cdots|x|y|\cdots-\cdots|y|x|\cdots$.

\noindent \textbf{Computation of D-type: $W=z^n+zw^2, G=\{1\}$}

For $G$ trivial, D. Shklyarov had shown in \cite{S16} that the Gauss-Manin system via similar non-commutative methods is equivalent to that given by Saito's singularity theory \cite{S83}. Denote by $B_{\hat{W}} = \mathds{C}[z,w]_{z^n+zw^2}$ a curved polynomial algebra with a curvature $\hat{W} = z^n+ zw^2$ and consider its deformation as
\[\mathcal{B}(\bm{\tau},s) \coloneqq B_{\hat{W}(\bm{\tau},s)}, \text{ with }\hat{W}(\bm{\tau},s) \coloneqq z^n +\sum_{j=0}^{n-1} \tau_j z^j + zw^2 -sw.\]
$\mathrm{HP}^c_\bullet(\mathcal{B}(\bm{\tau},s))$ equals to $\Big(\bigoplus\limits_{k=0}^{n-1}\mathds{C} [z^{k}][2] \oplus \mathds{C}[w][2]\Big)[[\bm{\tau},s]]((u))$ while regarded as a $\mathds{Z}_2$-graded $\mathds{C}[[\bm{\tau},s]]((u))$-module with a flat connection $\nabla$.
Similar as above, we can also find chain representations for $[z^k]$ and $[w]$ in the following forms,
\[\begin{dcases}
\hat{\ha}_{2k} &= \hat{\ha}^{(2)}_{2k}+\hat{\ha}^{(4)}_{2k}+\cdots,\\
\hat{\hb} &=  \hat{\hb}^{(2)}+\hat{\hb}^{(4)}+\cdots,
\end{dcases}\]
where $\hat{\ha}^{(p)}_k,\hat{\hb}^{(p)}\in C_p(B,B)[\bm{\tau},s,u]$ satisfies that
\begin{align}
\hat{\ha}_{2k}^{(2)} &=  z^{k} [z|w]-z^k[w|z], \forall 0\leqslant k\leqslant n,\\
\hat{\hb}^{(2)} & = w[z|w]-w[w|z].
\end{align}

Consider the bundle map $\Lambda$ on $\mathrm{Spec}(\mathds{C}[[\bm{\tau},s]])$, which maps sections of $\hp^c_\bullet(\mathcal{A}(\bm{\tau},s))$ to sections of $\hp^c_\bullet(\mathcal{B}(\bm{\tau},s))$ induced by $\ha_{2k} \mapsto \hat{\ha}_{2k}$ and $\hb\mapsto\hat{\hb}$.

\begin{thm}\label{main-thm} Viewed as bundles over $\mathrm{Spec}(\mathds{C}[[\bm{\tau},s]])$, the periodic cyclic homology $\hp^c_\bullet(\mathcal{A}(\bm{\tau},s))$ of the deformed curved algebra $\mathcal{A}(\bm{\tau},s)$ associated to the orbifold LG B-model $(\mathds{C}[x,y],x^{2n}+y^2,\mathds{Z}_2)$ and that of the deformed curved algebra $\mathcal{B}(\bm{\tau},s)$ associated to the LG B-model $(\mathds{C}[z,w],z^{2n}+zw^2)$ are isomorphic via the bundle map $\Lambda$ defined above. Furthermore, this isomorphism is compatible with the Getzler-Gauss-Manin connections on both bundles. To be explicit, for any $v\in \mathrm{Der}_{\mathds{C}}\mathds{C}[[\bm{\tau},s]]$ and any section $\theta$ of the bundle $\hp^c_\bullet(\mathcal{A}(\bm{\tau},s))$, we have
\begin{equation}\label{eq: connection compatibility}
\Lambda(\nabla_v(\theta)) = \nabla_v(\Lambda(\theta)).
\end{equation}
\end{thm}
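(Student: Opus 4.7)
The strategy is to verify the two parts of the statement separately: first, that $\Lambda$ is a well-defined bundle isomorphism, and second, that it intertwines the two Getzler-Gauss-Manin connections. Both $\hp^c_\bullet(\mathcal{A}(\bm{\tau},s))$ and $\hp^c_\bullet(\mathcal{B}(\bm{\tau},s))$ are free $\mathds{C}[[\bm{\tau},s]]((u))$-modules of rank $n+1$, with the asserted bases $\{\ha_{2k}\}_{k=0}^{n-1}\cup\{\hb\}$ and $\{\hat{\ha}_{2k}\}_{k=0}^{n-1}\cup\{\hat{\hb}\}$. Hence once we check that the given correspondence $\ha_{2k}\mapsto \hat{\ha}_{2k}$, $\hb\mapsto\hat{\hb}$ yields an isomorphism of $\mathds{C}[[\bm{\tau},s]]((u))$-modules, the bundle-map $\Lambda$ is automatically well defined.

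For the connection compatibility, since both $\nabla$'s are $\mathds{C}[[\bm{\tau},s]]$-linear in $v$ and satisfy the Leibniz rule, it suffices to verify \eqref{eq: connection compatibility} for $v$ running over the coordinate derivations $\partial/\partial \tau_0,\dots,\partial/\partial \tau_{n-1},\partial/\partial s$ and for $\theta$ running over the basis sections $\ha_{2k}$ and $\hb$. The first step is to record the curvature derivatives: on the A-side, $\partial_{\tau_k}b(\bm{\tau},s)=-x^{2k}\e$ and $\partial_s b(\bm{\tau},s)=b_\sigma$; on the D-side, $\partial_{\tau_k}\hat{b}(\bm{\tau},s)=-z^k$ and $\partial_s\hat{b}(\bm{\tau},s)=w$. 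Then for each pair $(v,\theta)$ I apply the formula
\begin{equation*}
\nabla_v(\theta)\;=\;v(\theta)\;-\;\tfrac{1}{u}\,\tilde{\bm{b}}(\bm{\tau},s)\{v(b)\}(\theta)\;-\;\tilde{\bm{B}}\{v(b)\}(\theta)
\end{equation*}
to the explicit chain representative of $\theta$ given by \eqref{eq: alpha2}--\eqref{eq: beta4}, and reduce the result modulo $(\tilde{\p}_{b(\bm{\tau},s)}+u\tilde{B})$-boundaries using the same perturbed homotopy retraction $\ho_C+\Phi\ho_K\Upsilon$ that defined the chain representatives. This produces an expansion $\nabla_v(\theta)=\sum_j M^j_{v,\theta}(\bm{\tau},s,u)\,\ha_{2j}+N_{v,\theta}(\bm{\tau},s,u)\,\hb$ with structure constants in $\mathds{C}[[\bm{\tau},s]]((u))$. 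The same procedure on the D-side yields an expansion of $\nabla_v(\Lambda(\theta))$ in the basis $\{\hat{\ha}_{2j},\hat{\hb}\}$, and the task reduces to checking that the structure constants agree on the nose.

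The key geometric input that makes the structure constants match is the identification between the untwisted sector of the A-model and the $z$-part of the D-model, and between the twisted sector and the $w$-part, as seeded by the lowest-order representatives $\ha_{2k}^{(2)}=x^{2k}[x\wedge y]$, $\hb^{(0)}=1\sigma$ versus $\hat{\ha}_{2k}^{(2)}=z^k[z\wedge w]$, $\hat{\hb}^{(2)}=w[z\wedge w]$. The operator $\bm{B}\{v(b)\}$ and the top-degree term of $\bm{b}\{v(b)\}$ act on these leading pieces and reproduce, on both sides, the same combinatorial rewriting of monomials in $x$ versus monomials in $z$, with the twisted-sector contributions on the A-side playing the role of the $w$-dependence on the D-side (this is exactly the McKay matching encoded in the potentials $x^{2n}+y^2$ and $z^n+zw^2$). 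Therefore the plan is to compute the action of $\nabla_v$ sector-by-sector, keeping only those finitely many terms in the infinite series defining $\ha_{2k}$ and $\hb$ whose output is not a boundary, and to match them term-for-term with the D-side.

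The main obstacle is computational rather than conceptual: the chain representatives are defined by infinite perturbation series, and naively substituting them into $\tilde{\bm{b}}\{v(b)\}$ and $\tilde{\bm{B}}\{v(b)\}$ produces a large number of tensor terms. The delicate point is to show that, after pairing each term against the appropriate homotopy, only finitely many survive modulo boundaries, and that those survivors exactly assemble into the required expansion in $\{\ha_{2j},\hb\}$ (respectively $\{\hat{\ha}_{2j},\hat{\hb}\}$) with matching coefficients. Carrying this out requires careful bookkeeping of Koszul signs and of the interaction between $\ho_C+\Phi\ho_K\Upsilon$ and the curvature-derivative insertions; once this is in place, equality \eqref{eq: connection compatibility} follows immediately by comparing the resulting structure constants, thus completing the proof.
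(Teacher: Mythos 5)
Your proposal follows essentially the same route as the paper: reduce to the coordinate derivations $\pd{}{\tau_k},\pd{}{s}$ and the basis sections $[\ha_{2k}],[\hb]$, apply the Getzler formula for $\nabla_v$ to the explicit chain representatives, project back with the perturbed homotopy retraction, and compare structure constants on the two sides. The only difference is organizational: where you propose brute-force computation throughout, the paper shortcuts the case $v=\pd{}{\tau_k}$, $\theta=[\hb]$ by combining flatness ($[\nabla_{\pd{}{s}},\nabla_{\pd{}{\tau_k}}]=0$) with the already-computed $\pd{}{s}$ cases, and handles $v=\pd{}{\tau_l}$, $\theta=[\ha_{2k}]$ by induction on $k+l$ (using the Jacobian-ring reduction of $x^{2k+2l}$ when $k+l\geqslant n$), which tames exactly the bookkeeping you flag as the main obstacle.
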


\begin{proof}
We will proof this by direct calculation using tools coming from homotopy perturbations. Notice that we only need to show (\ref{eq: connection compatibility}) for $v = \pd{}{\tau_j}$ or $\pd{}{s}$ and $\theta = [\alpha_{2k}],0\leqslant k\leqslant {n-1}$ or $[\beta]$.
\begin{description}
  \item[Case 1] $v = \pd{}{s}$ and $\theta = [1\sigma]$. We have
  \begin{align*}
  [\nabla_{\pd{}{s}}(\hat{\hb})] = -\f{1}{u}\big(-\sum_{j=1}^n j\tau_j [\hat{\ha}_{2j-2}]\big).
  \end{align*}
  By (\ref{eq: beta0}), (\ref{eq: beta2}) and (\ref{eq: beta4}), we have
  \begin{align*}
  \nabla_{\pd{}{s}}(\hb) = &-\f{1}{u}\big(\sum_{j=1}^n \tau_j \sum_{i=0}^{2j-2} (-1)^{i+1}(x^i[x^{2j-i-1}|y]-x^i[y|x^{2j-i-1}])\big) + (\text{order } \geqslant 4 \text{ terms}),
  \end{align*}
  While acted on the above by the perturbed projection,
  \begin{align*}&\sum_{l\geqslant 0}p\Pi\Upsilon\Big(- (s\tilde{\p}_{b_\sigma}+\tilde{\p}_{b_0(\bm{\tau},s)-b_0}+u\tilde{B})\sum_{i\geqslant 0}\big(-(\ho_C+\Phi\ho_K\Upsilon)\tilde{\p}_{b_0}\big)^i\\&\qquad\quad (\ho_C+ \Phi\ho_K\Upsilon + \Phi\Theta\ho_\Omega\Pi\Upsilon)\Big)^l,
  \end{align*}
  the section represented by the above is given by
  \begin{align*}
  -\f{1}{u}\sum_{j=1}^n \tau_j\Big(\sum_{i=0}^{2j-2} (-1)^{i+1}(2j-i-1)\Big)[\ha_{2j-2}]= -\f{1}{u}\big(-\sum_{j=1}^n j\tau_j[\ha_{2j-2}]\big).
  \end{align*}
  Hence
  \begin{equation}
  \Lambda(\nabla_{\pd{}{s}}([\hb])) = \nabla_{\pd{}{s}}(\Lambda([\hb])).
  \end{equation}
  \item[Case 2] $v = \pd{}{s}$ and $\theta = [\ha_{2k}]$ for $0\leqslant k\leqslant n-1$.
      \[\nabla_{\pd{}{s}}([\hat{\ha}_{2k}]) = \begin{dcases}
      -\f{1}{u} [\hat{\beta}],&k=0,\\
      -\f{1}{u} \big(\f{s}{2}[\hat{\ha}_{2k-2}]\big),&0< k\leqslant n-1.
      \end{dcases}\]
      And similarly,
      \begin{align}\nabla_{\pd{}{s}}(\ha_{2k}) =&-\f{1}{u}\big(x^{2k}\sigma + x^{2k}\sigma[y|y]+\sum_{j=1}^n \sum_{i=0}^{2j-2} \tau_j(-1)^ix^{2k+i}\sigma[x|x^{2j-i-1}]\nonumber\\ &\qquad+ u\sum_{i=0}^{2k-2} (-1)^i x^i\sigma [x|x^{2k-i-1}]\big)
      + (\text{order }\geqslant 4 \text{ terms}),\end{align}
      so after acted by the perturbed projection
      \begin{align*}
      \nabla_{\pd{}{s}}([\ha_{2k}]) = \begin{dcases}
      -\f{1}{u} [\hb], &k=0,\\
      -\f{1}{u} \big(\f{s}{2}[\ha_{2k-2}]\big),&0< k\leqslant n-1.
      \end{dcases}
      \end{align*}
      Hence,
      \begin{equation}
      \Lambda(\nabla_{\pd{}{s}}([\ha_{2k}])) = \nabla_{\pd{}{s}}(\Lambda([\ha_{2k}])).
      \end{equation}
  \item[Case 3] $v= \pd{}{\tau_k}$ for $0\leqslant k\leqslant n-1$ and $\theta = [\hb]$. By
      \[[\nabla_{\pd{}{s}},\nabla_{\pd{}{\tau_k}}] =0,\]
      and Case 2, we have obviously
      \begin{equation}
      \Lambda(\nabla_{\pd{}{\tau_k}}([\hb])) = \nabla_{\pd{}{\tau_k}}(\Lambda([\hb])).
      \end{equation}
  \item[Case 4] $v = \pd{}{\tau_l}$ and $\theta = [\ha_{2k}]$ for $0\leqslant k,l\leqslant n-1$. $\nabla_{\pd{}{\tau_l}}(\Lambda([\ha_{2k}]))$ is given by the homology class of
      \[ -\f{1}{u}\big(-z^{k+l}[z\wedge w]\big) + (\text{order }\geqslant 4 \text{ terms}),\]
      and $\nabla_{\pd{}{\tau_l}}([x^{2k}])$ is given by the homology class of
      \[ -\f{1}{u}\big(-x^{2k+2l}[x\wedge y]\big) + (\text{order }\geqslant 4 \text{ terms}),\]
      so we can show the statement (\ref{eq: connection compatibility}) by induction on $k+l$. In cases $k+l\leqslant n-1$, (\ref{eq: connection compatibility}) is obvious and we can assume that (\ref{eq: connection compatibility}) holds for $k+l\leqslant m-1$ with $2n-2 \geqslant m\geqslant n$. Then for $k+l = m$, $\nabla_{\pd{}{\tau_l}}(\Lambda([\ha_{2k}]))$ equals to the homology class of
      \begin{align*}
      -\f{1}{u}\Big(\sum_{j=1}^{n-1} \f{j}{n}\tau_j\big(z^{m-n+j} [z\wedge w]\big)+\f{1}{2n}s \big(z^{m-n}w[z\wedge w]\big)+u\f{2m-2n+1}{2n}\big(z^{m-n}[z\wedge w]\big)\Big).
      \end{align*}
      And similarly, $\nabla_{\pd{}{\tau_l}}([\ha_{2k}])$ equals to the homology class of
      \begin{align*}-\f{1}{u}\Big(\sum_{j=1}^{n-1} \f{j}{n}\tau_j \big(x^{2m-2n+2j}[x\wedge y]\big) + \f{1}{2n}s x^{2m-2n}\sigma +u\f{2m-2n+1}{2n}\big(x^{2m-2n}[x\wedge y]\big)\Big),
      \end{align*}
      By the same calculation in the above cases and our assumption, (\ref{eq: connection compatibility}) holds in these cases.
\end{description}
\end{proof}

\begin{appendices}
\section{Constructions of the Homotopies}\label{appendix: A}
Since we are working on the $G$-twisted chains, a direct calculation on the '$G$-twisted version' of periodic homology of those deformed algebras is needed. This can be done by constructing an explicit special homotopy retraction.

Firstly, consider the Koszul chains,
\[K_\bullet(A,A[G]) \coloneqq \bigoplus_{p\geqslant 0}A[G]\otimes \mathds{C}[e_1,e_2],\]
with $G$-action given by
\begin{equation}
g.\big(ahe_{i_1} \cdots e_{i_p}\big) = {}^gaghg^{-1}{}^{g}e_{i_1}\cdots{}^ge_{i_p}.
\end{equation}
Here $e_i$ are the odd parameters with respect to $x_i$. On Koszul chains, we can define a differential call a Koszul differential as
\begin{equation}
\p_K(age_{\bm{I}}) \coloneqq \sum_{k=1}^p(-1)^{k-1} ({}^gx_{i_k}-x_{i_k})age_{\bm{I}\setminus\{i_k\}},
\end{equation}
where $\bm{I} = \{i_1 < \cdots < i_p\}\subseteq \{1,2\}$. In \cite{SW11}, Shepler and Witherspoon introduced two chain maps $\Phi$ and $\Upsilon$ and in \cite{HLL18}, we construct a homotopy $\ho_C$ such that $\Phi,\Upsilon $ are both $G$-equivariant and we have a special homotopy retraction
\[\begin{tikzpicture}
\matrix (a) [matrix of math nodes, row sep=3.5em, column sep=4em, text height=2ex, text depth=0.9ex]
{\big(C_{\bullet}(\bar{A},A[G])_G,\tilde{\p}_{b_2}\big) & \big( K_{\bullet}(A,A[G])_G,\p_K\big).\\};
\path[-latex,semithick]
(a-1-1) edge node[above,font=\scriptsize] {$\Upsilon$} (a-1-2);
\path[latex-,semithick]
(a-1-1.-3) edge node[below,font=\scriptsize] {$\Phi$} (a-1-2.-177);
\draw[-latex,semithick]
(a-1-1.-170)  arc (-40:-320:11pt) -- ++ (.8pt,-.8pt);
\node[left,outer sep=20pt,font=\scriptsize] at (a-1-1.-180) {$\ho_C$};
\end{tikzpicture}\]
Equivalently, $(\Phi,\Upsilon,\ho_C)$ satisfies that
\[
\begin{dcases}
\Upsilon \circ \Phi = \id, \id-\Phi\circ \Upsilon = [\tilde{\p}_{b_2},\ho_C];\\
\ho_C\circ \ho_C = 0, \ho_C \circ \Phi =0, \Upsilon\circ \ho_C=0.
\end{dcases}
\]

Secondly, we can also construct a special homotopy retraction
\[\begin{tikzpicture}
\matrix (a) [matrix of math nodes, row sep=3.5em, column sep=4em, text height=2ex, text depth=0.9ex]
{\big( K_{\bullet}(A,A[G])_G,\p_K\big) & \big( \bigoplus\limits_{g\in G}\Omega^\bullet(\fix(g))_G,0\big).\\};
\path[-latex,semithick]
(a-1-1) edge node[above,font=\scriptsize] {$\Pi$} (a-1-2);
\path[latex-,semithick]
(a-1-1.-3) edge node[below,font=\scriptsize] {$\Theta$} (a-1-2.-177);
\draw[-latex,semithick]
(a-1-1.-170)  arc (-40:-320:11pt) -- ++ (.8pt,-.8pt);
\node[left,outer sep=20pt,font=\scriptsize] at (a-1-1.-180) {$\ho_K$};
\end{tikzpicture}\]
Here, chain maps $\Pi$ and $\Theta$ between
$\big( K_{\bullet}(A,A[G]),\p_K\big)$ and  $\big( \bigoplus\limits_{g\in G}\Omega^\bullet(\fix(g)),0\big)$
are defined as
\[
\Pi (x_1^{\gamma_1}x_2^{\gamma_2}g e_{\bm{I}}) \coloneqq
\begin{dcases} \lr{x_1^{\gamma_1}x_2^{\gamma_2}}|_{\fix(g)}\df x_{\bm{I}},&\text{if }\bm{I}_g \cap\bm{I}=\emptyset,\\
 0,\quad&\text{else,}
\end{dcases}
\]
where $\bm{I}_g = \{i =1,2 \mid \lambda_i\neq 1\}$, and for a differential form $x_1^{\gamma_1}x_2^{\gamma_2}\df x_{\bm{I}}\in \Omega^{\bullet}(\fix(g))$,
\[
\Theta(x_1^{\gamma_1}x_2^{\gamma_2}\df x_{\bm{I}}) \coloneqq x_1^{\gamma_1}x_2^{\gamma_2}ge_{\bm{I}}.
\]
We can define the weight of a Koszul chain as
\begin{align}
\mathsf{wt}(x_1^{\gamma_1}x_2^{\gamma_2} g e_{\bm{I}}) = \sum_{k,\lambda_k\neq 1} \gamma_k+\sum_{i\in \bm{I},\lambda_i\neq 1} 1.
\end{align}
Then for a chain $\kappa$ with $\mathsf{wt}(\kappa) \neq 0$, we can define the homotopy $\ho_K$ as
\begin{equation}
\ho_K \colon \kappa = x_1^{\gamma_1}x_2^{\gamma_2} g e_{\bm{I}} \mapsto \sum_{i\in \bm{I},\lambda_{i}\neq 1} \f{1}{\mathsf{wt}(\kappa)} \f{1}{\lambda_i-1}\f{\p }{\p x_i}\lr{x_1^{\gamma_1}x_2^{\gamma_2}} g e_i\wedge e_{\bm{I}},
\end{equation}
where ${}^g x_i = \lambda_i x_i$. If $\mathsf{wt}(\kappa) = 0$, we will ask $\ho_K(\kappa) = 0$.
Notice that we also have $\Pi,\Theta $ and $\ho_K$ are all $G$-equivariant and they give a special homotopy retraction. The former is obvious and the later is by
\begin{enumerate}
  \item $\Pi \circ \Theta = \id$,
  \item $\id - \Theta \circ \Pi = [\ho_K, \p_K]$, because for $\kappa = age_{\bm{I}}$ with $\mathrm{wt}(\kappa) \neq 0$, we have
  \begin{align*}
  [\ho_K,\p_K](\kappa) = & \sum_{i\in\bm{I}\bigcap \bm{I}_g} \f{1}{\wt (\kappa)} \f{\p }{\p x_i}\lr{x_ia} ge_{\bm{I}}+ \sum_{i\in \bm{I}_g\setminus \bm{I}} \f{1}{\wt(\kappa)} x_i\f{\p }{\p x_i}\lr{a}ge_{\bm{I}}\\
   = & \kappa.
  \end{align*}
  and for $\kappa$ with $\wt(\kappa) =0 \iff \kappa = \Theta \circ \Pi(\kappa)$, we have  $[\ho_K,\p_K](\kappa)=0$.
  \item $\ho_K \circ \ho_K = 0$, $\ho_K \circ \Theta = 0$ and $\Pi \circ \ho_K = 0$ for obvious reasons.
\end{enumerate}

As a direct corollary, we have the following homotopy retraction,
\[\begin{tikzpicture}
\matrix (a) [matrix of math nodes, row sep=3.5em, column sep=4em, text height=2ex, text depth=0.9ex]
{\big( C_{\bullet}(\bar{A},A[G])_G,\tilde{\p}_{b_2}\big) & \big( \bigoplus\limits_{g\in G}\Omega^\bullet(\fix(g))_G,0\big).\\};
\path[-latex,semithick]
(a-1-1) edge node[above,font=\scriptsize] {$\Pi\circ\Upsilon$} (a-1-2);
\path[latex-,semithick]
(a-1-1.-3) edge node[below,font=\scriptsize] {$\Phi \circ \Theta$} (a-1-2.-177);
\draw[-latex,semithick]
(a-1-1.-170)  arc (-40:-320:11pt) -- ++ (.8pt,-.8pt);
\node[left,outer sep=20pt,font=\scriptsize] at (a-1-1.-180) {$\ho_C + \Phi \circ \ho_K \circ \Upsilon$};
\end{tikzpicture}\]
By taking the compact type, we can regard $\tilde{\p}_{b_0}$ as a small perturbation of $\tilde{\p}_{b_2}$ to get a perturbed special homotopy retraction. (See for example \cite{C04}.)

\begin{lem} The induced differential on $\Omega^\bullet(\fix(g))_G$ by the perturbation with respect to $\tilde{\p}_{b_0}$ is given by $\df W_g \wedge$.
\end{lem}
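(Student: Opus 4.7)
The approach is to apply the homological perturbation lemma, viewing $\tilde{\p}_{b_0}$ as a small perturbation of $\tilde{\p}_{b_2}$ on $C_\bullet(\bar{A}, A[G])_G$ and transferring it through the composed contraction onto $\bigoplus_{g\in G}\Omega^\bullet(\fix(g))_G$. It is cleanest to factor the calculation into two stages, handling the Koszul complex as an intermediate step.

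First, I would transfer the perturbation along the Shepler-Witherspoon retraction $(\Phi,\Upsilon,\ho_C)$ to obtain a perturbation of $(K_\bullet(A, A[G])_G, \p_K)$. The claim here is that the induced operator is exterior multiplication by $\df W$, namely $age_{\bm{I}} \mapsto \sum_{i=1}^{2}(\p_i W)\, ag\, e_i \wedge e_{\bm{I}}$. The key verification is a direct HKR-style computation: applying $\tilde{\p}_{b_0}$ to a Hochschild chain $ag[a_1|\cdots|a_p]$ inserts $W$ at each of the $p+1$ positions with alternating signs, and after antisymmetrization via $\Upsilon$ these signs conspire so that the scalar $W$ upgrades to the one-form $\df W$. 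The higher-order terms in the perturbation cascade involving $\ho_C$ vanish because $W$ is $\tilde{\p}_{b_2}$-closed, so no new boundary corrections are generated.

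Second, apply homological perturbation to the remaining retraction $(\Theta,\Pi,\ho_K)$ with the new perturbation $\df W\cdot$. For a form $\omega = x_1^{\gamma_1}x_2^{\gamma_2}\df x_{\bm{I}}\in \Omega^\bullet(\fix(g))_G$, the zeroth-order contribution is
\[\Pi\bigl((\df W\cdot)\,\Theta(\omega)\bigr) = \sum_{\substack{i\notin \bm{I}_g \\ i\notin \bm{I}}}(\p_i W)|_{\fix(g)}\, x_1^{\gamma_1}x_2^{\gamma_2}\df x_i\wedge \df x_{\bm{I}},\]
which equals $\df W_g\wedge \omega$ by the identity $\df W_g = \sum_{i\notin \bm{I}_g}(\p_i W_g)\,\df x_i$ on $\fix(g)$. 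The higher-order contributions in the perturbation series all vanish: every application of $\ho_K$ introduces an extra $e_j$ with $j\in \bm{I}_g$, and subsequent wedges with $\df W$ cannot remove this index, so the result is annihilated by $\Pi$, whose defining condition requires $\bm{I}_g \cap \bm{I} = \emptyset$. Hence the cascade truncates after the zeroth-order term.

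The main obstacle lies in Stage~1, specifically in the sign-tracking that upgrades polynomial multiplication by $W$ into exterior multiplication by $\df W$ after transferring through the HKR-type map $\Upsilon$; the alternating signs from the insertion positions must precisely match the antisymmetrization signs. Once this identification is established, Stage~2 is essentially automatic from the structural incompatibility between the $\bm{I}_g$-raising effect of $\ho_K$ and the $\bm{I}_g$-vanishing condition built into $\Pi$.
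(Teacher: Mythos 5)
Your overall strategy --- transferring $\tilde{\p}_{b_0}$ through the composed special retraction by homological perturbation --- is the same as the paper's, and your Stage 2 is sound: the observation that $\ho_K$ only outputs terms carrying some $e_i$ with $i\in\bm{I}_g$, that wedging with $\df W$ never removes such an index, and that $\Pi$ annihilates any such term, is a correct (slightly more hands-on) version of the paper's use of the side condition $\Pi\circ\ho_K=0$. The zeroth-order identification $\Pi\bigl((\df W\wedge)\Theta(\omega)\bigr)=\df W_g\wedge\omega$ is also correct.

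The gap is in Stage 1. You assert that the perturbation transferred to $\bigl(K_\bullet(A,A[G])_G,\p_K\bigr)$ is \emph{exactly} $\df W\wedge$, and justify the vanishing of all higher-order corrections by ``$W$ is $\tilde{\p}_{b_2}$-closed.'' Centrality of $W$ only says that $\tilde{\p}_{b_0}$ commutes with $\tilde{\p}_{b_2}$, i.e.\ that it is a legitimate perturbation; it says nothing about the higher terms $\Upsilon\tilde{\p}_{b_0}(\ho_C\tilde{\p}_{b_0})^i\Phi$, $i\geqslant 1$, whose vanishing would require an identity such as $\Upsilon\tilde{\p}_{b_0}\ho_C=0$ or $\ho_C\tilde{\p}_{b_0}\Phi=0$ --- neither of which follows from the side conditions $\Upsilon\ho_C=0$, $\ho_C\Phi=0$ alone, since $\tilde{\p}_{b_0}$ sits between them. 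Moreover, even the zeroth-order claim $\Upsilon\tilde{\p}_{b_0}\Phi=\df W\wedge$ on the nose (before applying $\Pi$) is delicate: a naive antisymmetrization of the $p+1$ insertion positions of $W$ produces a combinatorial multiplicity rather than a clean $\df W\wedge$, so the statement depends on the precise Shepler--Witherspoon maps; you flag this as ``the main obstacle'' but do not resolve it. The paper sidesteps both difficulties by never making a claim at the Koszul level: it proves only the weaker intertwining identity $\Pi\Upsilon\circ\tilde{\p}_{b_0}=(\df W_g\wedge)\circ\Pi\Upsilon$ (the extra $\Pi$, which restricts to $\fix(g)$, is what makes this checkable), and then kills every higher-order term of the full composite series in one stroke via $\Pi\Upsilon(\ho_C+\Phi\ho_K\Upsilon)=\Pi\Upsilon\ho_C+\Pi\ho_K\Upsilon=0$. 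To repair your argument you should either prove the exact Koszul-level statement for the specific $(\Phi,\Upsilon,\ho_C)$, or replace Stage 1 by the paper's post-composed identity.
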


\begin{proof}By definition, the induced differential is given by
\[\Pi\Upsilon \circ\tilde{\p}_{b_0}\circ\Phi\Theta+ \sum_{i\geqslant 1}(-1)^i\Pi\Upsilon \circ \tilde{\p}_{b_0}\big((\ho_C+ \Phi\ho_K\Upsilon)\tilde{\p}_{b_0}\big)^i\circ \Phi\Theta.\]
Notice that
\begin{equation}
\Pi\Upsilon \circ\tilde{\p}_{b_0}\circ\Phi\Theta|_{\Omega^\bullet(\fix(g))_G} = \df W_g\wedge, \text{ and }\Pi\Upsilon \circ\tilde{\p}_{b_0} = (\df W_g\wedge) \circ\Pi\Upsilon \text{ on }\Omega^\bullet(\fix(g))_G,
\end{equation}
so the higher order terms in the differential on the $g$-sector $\Omega^\bullet(\fix(g))_G$ can be written as
\begin{align*}
&\sum_{i\geqslant 1}(-1)^i\Pi\Upsilon \circ \tilde{\p}_{b_0}\big((\ho_C+ \Phi\ho_K\Upsilon)\tilde{\p}_{b_0}\big)^i\circ \Phi\Theta\\
=& \sum_{i\geqslant 1}(-1)^i (\df W_g \wedge) \circ \Pi\Upsilon \big((\ho_C+ \Phi\ho_K\Upsilon)\tilde{\p}_{b_0}\big)^i\circ \Phi\Theta\\
=& 0.
\end{align*}
The last equality is because $\Upsilon \ho_c = 0$ and $\Pi \Upsilon\Phi \ho_K = \Pi \ho_K=0$.
\end{proof}

In summary, we have the following homotopy retraction,
\begin{equation}\label{eq: undeformed homotopy}\begin{tikzpicture}
\matrix (a) [matrix of math nodes, row sep=3.5em, column sep=4em, text height=2ex, text depth=0.9ex]
{\big( C_{\bullet}^c(\bar{A},A[G])_G,\tilde{\p}_{b}\big) & \big( \bigoplus\limits_{g\in G}\jac(W_g)_G[\overline{2-\abs{\bm{I}_g}}],0\big).\\};
\path[-latex,semithick]
(a-1-1) edge (a-1-2);
\path[latex-,semithick]
(a-1-1.-4) edge (a-1-2.-177);
\draw[-latex,semithick]
(a-1-1.-170)  arc (-40:-320:11pt) -- ++ (.8pt,-.8pt);
\end{tikzpicture}\end{equation}
Here the homotopy is given by
\begin{align}\label{eq: curved homotopy}
&\sum_{n\geqslant 0}\big(-(\ho_C+\Phi\ho_K\Upsilon)\tilde{\p}_{b_0}\big)^n(\ho_C+ \Phi\ho_K\Upsilon) + \sum_{m\geqslant 0}\big(-(\ho_C+\Phi\ho_K\Upsilon)\tilde{\p}_{b_0}\big)^m\Phi\Theta\ho_\Omega\Pi\Upsilon\nonumber\\
=& \sum_{n\geqslant 0}\big(-(\ho_C+\Phi\ho_K\Upsilon)\tilde{\p}_{b_0}\big)^n(\ho_C+ \Phi\ho_K\Upsilon + \Phi\Theta\ho_\Omega\Pi\Upsilon).
\end{align}
where $\ho_\Omega$ is some appropriate homotopy on $\bigoplus\limits_{g\in G}\Omega^\bullet(\fix(g))_G$, such that
\[\begin{tikzpicture}
\matrix (a) [matrix of math nodes, row sep=3.5em, column sep=4em, text height=2ex, text depth=0.9ex]
{\big( \bigoplus\limits_{g\in G}\Omega^\bullet(\fix(g))_G,\bigoplus\limits_{g\in G}\df W_g\big) & \big( \bigoplus\limits_{g\in G}\jac(W_g)_G[\overline{2-\abs{\bm{I}_g}}],0\big),\quad\\};
\path[-latex,semithick]
(a-1-1) edge node[above,font=\scriptsize] {$p$} (a-1-2);
\path[latex-,semithick]
(a-1-1.-3) edge node[below,font=\scriptsize] {$i$}(a-1-2.-177);
\draw[-latex,semithick]
(a-1-1.-172) ++ (-1pt,0) arc (-40:-320:11pt) --  ++ (.8pt,-.8pt);
\node[left,outer sep=20pt,font=\scriptsize] at (a-1-1.-180) {$\ho_\Omega$};
\end{tikzpicture}\]
gives a special homotopy retraction. We can further require that $\ho_\Omega|_{\Omega^\bullet(\fix(g))_G} = 0$ for $g \neq e$ in $2$-dimensional Calabi-Yau cases. Then (\ref{eq: undeformed homotopy}) gives a special homotopy retraction.

\begin{rem}There are many choices for $(i,p,\ho_\Omega)$ and none is canonical. For example, for $W = x^n+y^m$ with $G$ trivial on $\mathds{C}^2$, we can select such an $\ho_\Omega$ as
\begin{align}
\ho_\Omega(x^ay^b \df x\wedge \df y) \coloneqq & \begin{dcases}
\f{1}{n} x^{a-n+1}y^b\df y, & a\geqslant n-1,\\
-\f{1}{m} x^a y^{b-m+1} \df x,& a< n-1, b\geqslant m-1,\\
0, & \text{ else,}
\end{dcases}\\
\ho_\Omega ( x^ay^b \df x)\coloneqq &
\begin{dcases}
\f{1}{n} x^{a-n+1}y^b,&a\geqslant n-1,\\
0,& a<n-1,
\end{dcases}\\
\ho_\Omega ( x^ay^b \df y)\coloneqq & 0,
\end{align}
with the easiest inclusion $i$ and projection $p$.

\end{rem}
Finally, we can regard the deformed differential $\tilde{\p}_{b(\bm{\tau},\bm{s})}+u\tilde{B}$ as a perturbation of $\tilde{\p}_{b}$. Then if $G$ satisfies the Calabi-Yau condition that $G\subset SL(2,\mathds{C})$, we have a special homotopy retraction,
\begin{equation}\label{eq: deformed homotopy}
\begin{tikzpicture}
\matrix (a) [matrix of math nodes, row sep=3.5em, column sep=2em, text height=2ex, text depth=0.9ex]
{\big( C_{\bullet}^c(\bar{A},A[G])_G[[\bm{\tau},\bm{s}]]((u)),\tilde{\p}_{b(\bm{\tau},\bm{s})}+u\tilde{B}\big) & \big( \bigoplus\limits_{g\in G}\jac(W_g)_G[\overline{2-\abs{\bm{I}_g}}][[\bm{\tau},\bm{s}]]((u)),0\big).\\};
\path[-latex,semithick]
(a-1-1) edge (a-1-2);
\path[latex-,semithick]
(a-1-1.-2) edge (a-1-2.-178);
\draw[-latex,semithick]
(a-1-1.-176)  arc (-40:-320:11pt) -- ++ (.8pt,-.8pt);
\end{tikzpicture}
\end{equation}

The induced differential on the right complex is zero because it is concentrated in even degrees.
\begin{rem} We only construct the homotopies in two dimensional cases. However, they all can generalized in any dimensions.
\end{rem}
\end{appendices}

\Addresses

\end{document}